\title[Quantum dynamics with Liouville discrepancies]{Quantum Dynamical Bounds for Quasi-Periodic  Operators with Liouville Frequencies}
\date{\today}
\author[]{Matthew Bradshaw}
\address[Matthew Bradshaw]{
	University of Connecticut, Storrs, CT, 06269, USA.
    }
\email{matthew.bradshaw@uconn.edu}
\author[]{Titus de Jong}
\address[Titus de Jong]{
	University of California Irvine, Irvine, CA, 92697, USA.
}
\email{typdejong@gmail.com, tydejong@uci.edu}
\author[]{Wencai Liu}
\address[Wencai Liu]{
	Department of Mathematics, Texas A\&M University, College Station, TX, 77843, USA.
}
\email{liuwencai1226@gmail.com, wencail@tamu.edu}
\author[]{Audrey Wang}
\address[Audrey Wang]{
	University of Chicago, 5801 S Ellis Ave, Chicago, IL 60637, USA.
}
\email{audrey.yx.wang@gmail.com}
\author[]{Xueyin Wang}
\address[Xueyin Wang]{
	Department of Mathematics, Texas A\&M University, College Station, TX, 77843, USA.
}
\email{xueyin@tamu.edu}
\author[]{Bingheng Yang}
\address[Bingheng Yang]{
	Department of Mathematics, Texas A\&M University, College Station, TX, 77843, USA.
}
\email{bhyang@tamu.edu}
\theoremstyle{plain}
\newtheorem{theorem}{Theorem}[section]
\newtheorem{corollary}[theorem]{Corollary}
\newtheorem{lemma}[theorem]{Lemma}
\theoremstyle{definition}
\newtheorem{definition}[theorem]{Definition}
\newtheorem{remark}[theorem]{Remark}
\DeclareMathOperator{\leb}{Leb}
\DeclareMathOperator{\dist}{dist}
\begin{document}
% \begin{sloppypar}

\begin{abstract}
    We establish quantum dynamical upper bounds for quasi-periodic Schr\"odinger operators with Liouville frequencies. Our approach combines semi-algebraic discrepancy estimates for the Kronecker sequence $\{n\alpha\}$ with quantitative Green's function estimates adapted to the Liouville setting.
\end{abstract}

\maketitle	

\section{Introduction}

In this paper, we study the quantum dynamics of one-dimensional discrete Schr\"{o}dinger operators $H$. It is well-known that the solution of the time-dependent Schr\"{o}dinger equation $i\partial _t\psi=H\psi$ is given by $\psi(t)=e^{-itH} \psi(0)$.
For simplicity, we assume that the initial condition $\psi(0)=\phi$ has compact support.
The position operator $X$ is defined as
\begin{equation*}
    (X\psi)_{n}=n\psi_{n}.
\end{equation*}
To describe the evolution of $\psi(t)$, we focus on the time-averaged $p^{\rm th}$ moment of the position operator defined via
\begin{equation}\label{pm}
    \langle |X_{H}|_{\phi}^{p}\rangle(T):=\frac{2}{T}\int_{0}^{\infty} e^{-2t/T}\langle \psi(t),|X|^{p}\psi(t) \rangle d t.
\end{equation}
Thus the growth of $T\mapsto\langle |X_{H}|_{\phi}^{p}\rangle(T)$ reflects the speed of which the particles spread out.

In particular, we consider the quasi-periodic Schr\"odinger operators $H=H_{\theta}$ on $\ell^2(\mathbb{Z})$,
\begin{equation}\label{H}
    (H_{\theta}\psi)_n=\psi_{n+1}+\psi_{n-1}+V(\theta+n\alpha)\psi_n,
\end{equation}
where $V\in C^{\omega}(\mathbb{T},\mathbb{R})$ is the potential, $\theta\in\mathbb{T}$ is the phase, and $\alpha\in\mathbb{R}\setminus\mathbb{Q}$ is the frequency. The quantum dynamical behavior of $H_{\theta}$ depends heavily on the arithmetic property of the frequency $\alpha$ \cite{MR1423040, MR1663518, MR4404788, MR4578350}. We say $\alpha$ is Diophantine if $\|n\alpha\|_{\mathbb{T}} \geqslant \eta |n|^{-\gamma}$ for some $\eta>0,\gamma\geqslant 1$ where $\|\cdot\|_{\mathbb{T}}:=\dist(\cdot,\mathbb{Z})$. We say $\alpha$ is Liouville if it is not Diophantine.

It has been shown that in the regime of large potential $V$ and Diophantine frequency $\alpha$, the quantity $\langle |X_{H_{\theta}}|_{\phi}^{p}\rangle(T)$ remains finite for almost every $\theta\in\mathbb{T}$, and we achieve the so-called dynamical localization, see \cite{MR2100420,MR4637128,MR1796713,MR4216568}. However, dynamical localization fails to hold for generic $\theta \in \mathbb{T}$ \cite{JitoSimonSC, MR4756946, jitomirskaya2024sharp}. 
% As a result, studying the growth of $\langle |X_{H_{\theta}}|_{\phi}^{p}\rangle(T)$ uniformly in $\theta$ becomes significantly more challenging and interesting.

For Diophantine frequency $\alpha$, the growth behavior of $\langle |X_{H_{\theta}}|_{\phi}^{p}\rangle(T)$ has been investigated through various methods. 
% For instance, Damanik–Tcheremchantsev \cite{MR2291919, MR2464194} proved an upper bound of $T^{\varepsilon}$ for trigonometric polynomial potentials using the transfer matrix method. Han–Jitomirskaya \cite{MR3869380} established a power-logarithmic bound $(\log T)^{C}$ for a family of  ergodic potentials. Jitomirskaya–Powell \cite{JPo} further proved a power-logarithmic bound $(\log T)^{C}$ by combining the large deviation theorem for transfer matrices with techniques from Damanik–Tcheremchantsev \cite{MR2291919, MR2464194}. 
% Additionally, Jitomirskaya–Liu \cite{MR4288185} introduced a new approach based on Green's function estimates at suitable scales, which  works for long-range operators. For Diophantine frequencies, they established an upper bound of the form $T^{\varepsilon}$. 
% More recently, Shamis–Sodin \cite{MR4604835} developed a method applicable to operators on $\mathbb{Z}^{d}(d\geqslant 1)$, using quantitative Green's function estimates from \cite{MR4546503}.
In \cite{MR4564259}, Liu established quantum dynamical upper bounds for long-range operators on $\mathbb{Z}^{d}(d\geqslant 1)$ based on the sublinear bounds for the semi-algebraic discrepancy:
\begin{equation}\label{sl}
    \# \big\{\, |n| \leqslant N : \theta + n\alpha \bmod\mathbb{Z} \in \Theta \,\big\} \leqslant N^{1 - \delta},
\end{equation}
where $\Theta$ is a semi-algebraic set with suitable complexity. In \cite{prema,LPW}, the authors further developed this approach by applying tools from analytic number theory to analyze the discrepancy of semi-algebraic sets in the setting of quasi-periodic operators  with multi-frequency shift and skew-shift potentials. More recently, Shamis–Sodin \cite{MR4604835} developed a method applicable to operators on $\mathbb{Z}^{d}$, using quantitative Green's function estimates from \cite{MR4546503}. The arguments in \cite{MR4564259, MR4604835} both require that the bad boxes are sufficiently rare so that they cannot fill any line of a large scale box (i.e., a sublinear bound as in \eqref{sl}).

All known results establishing power-logarithmic upper bounds of the form $(\log T)^C$ crucially rely on the Diophantine condition for $\alpha$. This naturally leads to the question: What is the growth behavior of $\langle |X_{H_{\theta}}|_{\phi}^{p}\rangle(T)$ when $\alpha$ is Liouville? For Liouville frequencies, sublinear discrepancy bounds fail, rendering the methods of \cite{MR4564259,prema,LPW} inapplicable.

Our proof is inspired by Jitomirskaya–Liu \cite{MR4288185}, which established a  $T^{\varepsilon}$ bound under the existence of a single suitably chosen box of the Green’s function. A key observation in the present work is that a similar one-box assumption already suffices to obtain a power–logarithmic bound of the form $(\log T)^C$. The argument in \cite{MR4288185} relied on the monotonicity of the transport exponent with respect to $p$, which, however, led to the loss of a sharper estimate for $\langle |X_{H}|_{\phi}^{p}\rangle(T)$. In our approach, we decompose the estimate of $\langle |X_{H}|_{\phi}^{p}\rangle(T)$ across different scales, which allows us to achieve an improved bound.

In this paper, we first prove a general criterion (Theorem~\ref{General Transport Bound}) for deriving quantum dynamical upper bounds. Notably, this criterion applies even to long-range operators without underlying dynamical systems. The criterion is based solely on the existence of a single suitably chosen box of the Green’s function. 

We then turn to discrepancy estimates for shift dynamics $\{n\alpha\}$ on semi-algebraic sets in the case of Liouville frequencies. As a consequence, we verify the existence of a box  of  the desired Green’s function, extending the sublinear bounds known in the Diophantine case to a weaker, yet still effective, setting.

Denote by $L(E)$ the Lyapunov exponent, see Section \ref{prelim} for the definition. As applications to quasi-periodic operators, we obtain the following results:
\begin{theorem}\label{qdDC}
    Let $V\in C^{\omega}(\mathbb{T},\mathbb{R})$ be non-constant and assume $L(E)>\lambda>0$ for all $E\in\mathbb{R}$. Let $\eta>0, \gamma\geqslant 1$. Suppose that $\alpha\in \mathbb{R}$ satisfies
    \begin{equation*}
        \| n\alpha\|_{\mathbb{T}} \geqslant \eta |n|^{-\gamma}\quad \text{for all}\ n\in\mathbb{Z}\setminus\{0\}.
    \end{equation*}
    Then there exists $C_{0}>0$ such that for any $p>0,\varepsilon>0$ and $\phi\in\ell^{2}(\mathbb{Z})$ compactly supported, there exists $T_{1}(\alpha,V,\lambda,\phi,p,\varepsilon)>0$ such that for  $T\geqslant T_{1}$,
    \begin{equation*}
        \sup_{\theta\in\mathbb{T}}\langle |X_{H_{\theta}}|^p_{\phi} \rangle (T) \leqslant (\log T)^{pC_{0}\gamma + \varepsilon}.
    \end{equation*}
\end{theorem}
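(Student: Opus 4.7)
The plan is to deduce Theorem \ref{qdDC} as a direct application of the general criterion (Theorem \ref{General Transport Bound}): it suffices to exhibit, for every phase $\theta \in \mathbb{T}$ and every sufficiently large scale $T$, an integer $n_0$ within $T^{o(1)}$ of the support of $\phi$ such that the finite-volume Green's function $G_{[n_0 - N, n_0 + N]}^{H_\theta}(E)$ enjoys the exponential off-diagonal decay required by the criterion, uniformly in $E$ in a compact neighborhood of $\sigma(H_\theta)$, for a box size $N = (\log T)^{C_0 \gamma}$. The factor $\gamma$ will arise exactly because we convert a polynomial small-denominator estimate into a logarithmic scale.

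To produce the good box I proceed in three stages. First, starting from the uniform lower bound $L(E) > \lambda > 0$ on the Lyapunov exponent, I would invoke the standard large deviation theorem for analytic quasi-periodic cocycles (Bourgain--Goldstein--Schlag) to get, at each scale $N$, a Lebesgue bound $|B_N| \leqslant e^{-N^{\sigma}}$ for the ``bad phase'' set
\begin{equation*}
B_N = \bigl\{ \theta \in \mathbb{T} : [-N,N] \text{ is not a good box for } H_\theta \text{ at some energy in the spectral window} \bigr\},
\end{equation*}
where ``good box'' means both that the finite-scale Lyapunov exponent $L_N(E)$ is close to $L(E)$ and that Dirichlet determinants at the endpoints are not too small, ensuring the Poisson formula yields the required decay. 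After polynomially truncating the Fourier series of $V$ at scale $N^{O(1)}$, $B_N$ becomes semi-algebraic with complexity polynomial in $N$.

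Second, I use the Diophantine property $\|n\alpha\|_{\mathbb{T}} \geqslant \eta |n|^{-\gamma}$ together with the Bourgain--Goldstein sublinear bound for semi-algebraic sets along a Kronecker orbit: there exist absolute constants $c, \delta > 0$ (depending on $V$) such that for every $M \geqslant N^{c\gamma}$ and every $\theta$,
\begin{equation*}
\# \bigl\{ |n| \leqslant M : \theta + n\alpha \bmod \mathbb{Z} \in B_N \bigr\} \leqslant M^{1 - \delta}.
\end{equation*}
In particular any window of $M$ consecutive integers meets $\mathbb{T} \setminus B_N$, producing some $n_0$ for which $[n_0 - N, n_0 + N]$ is a good box for $H_\theta$.

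Third, choose $N = (\log T)^{C_0 \gamma}$ with $C_0 = c/\sigma$, so that the required window size $M = N^{c\gamma}$ is only a small power of $\log T$, well below $T$. Feeding this one-box input into Theorem \ref{General Transport Bound} yields
\begin{equation*}
\sup_{\theta \in \mathbb{T}} \langle |X_{H_\theta}|_\phi^p \rangle(T) \lesssim N^p (\log T)^{\varepsilon} = (\log T)^{p C_0 \gamma + \varepsilon},
\end{equation*}
as claimed. The main technical obstacle is the bookkeeping in the first two steps: ensuring that after analytic truncation the bad set really is semi-algebraic of complexity polynomial in $N$, and that the Green's function estimates coming from the avalanche principle plus LDT are packaged into polynomial inequalities uniformly over the relevant energy window. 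Once this is arranged, the chain (Diophantine) $\Rightarrow$ (sublinear discrepancy on $B_N$) $\Rightarrow$ (one good box per window) $\Rightarrow$ (power-log dynamical bound via Theorem \ref{General Transport Bound}) is essentially automatic.
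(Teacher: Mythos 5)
Your proposal follows the same architecture as the paper's proof: a large deviation theorem yields an exponentially small bad phase set for boxes of a given size, which after Fourier truncation of $V$ is semi-algebraic of polynomial complexity (this is exactly Theorem \ref{LDTgreen}); the Diophantine condition plus semi-algebraic covering gives a sublinear count of bad positions along the orbit $\{\theta+j\alpha\}$ (the paper packages this as Theorem \ref{semidks} combined with Corollary \ref{cor_NlogN} rather than quoting Bourgain--Goldstein, but the mechanism is identical); and the resulting single good box per window is fed into Theorem \ref{General Transport Bound}. However, your quantitative bookkeeping in the last step inverts the two scales and, taken literally, produces the wrong $\gamma$-dependence. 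The criterion does not bound the moment by the $p$-th power of the \emph{box} size; it requires, for every distance scale $R$, a good box of size at least $\Psi(R)$ sitting inside $[R/4,R/2]$ or $[-R/2,-R/4]$, and then bounds the moment by $\bigl[\Psi^{-1}\bigl(\tfrac{80}{c_2}\log T\bigr)\bigr]^p$, i.e.\ by the $p$-th power of the \emph{distance} scale at which the guaranteed box size reaches $\asymp\log T$. If a good box of size $L$ can only be guaranteed in windows of length $M\geqslant L^{c\gamma}$ (with $c=5C(1)$ coming from $\deg\Theta_L\lesssim L^{5}$ and Lemma \ref{covers}), then $\Psi(R)=R^{1/(c\gamma)}$, the critical box size is $L\asymp\log T$, the critical distance is $R\asymp(\log T)^{c\gamma}$, and the bound is $(\log T)^{pc\gamma+\varepsilon}$, as in the paper with $\delta=\frac{1}{5C\gamma}-\varepsilon$. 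Your choice of box size $N=(\log T)^{C_{0}\gamma}$ forces the window, hence the truncation radius and hence the moment bound, up to $N^{c\gamma}=(\log T)^{C_{0}c\gamma^{2}}$, giving $(\log T)^{pC_{0}c\gamma^{2}}$ rather than the claimed $(\log T)^{pC_{0}\gamma+\varepsilon}$. With the scales put in the correct order (and with the good position $j'$ sought in the annulus $[R/4,R/2]$ rather than merely within $T^{o(1)}$ of $\operatorname{supp}\phi$, which the $o(R)$ bad count permits), your argument reproduces the paper's proof.
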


\begin{theorem}\label{qdWDC}
    Let $V\in C^{\omega}(\mathbb{T},\mathbb{R})$ be non-constant and assume $L(E)>\lambda>0$ for all $E\in\mathbb{R}$. Let $\eta>0, \kappa>0,\gamma > 1$. Suppose that $\alpha \in \mathbb{R}$ satisfies
    \begin{equation*}
        \|n\alpha\|_{\mathbb{T}} \geqslant \eta e^{-\kappa (\log |n|)^{\gamma}} \quad \text{for all}\ n\in\mathbb{Z}\setminus\{0\}.
    \end{equation*}
    Then there exists $C_{0}>0$ such that for any $p>0,\varepsilon>0$ and $\phi\in\ell^{2}(\mathbb{Z})$ compactly supported, there exists $T_{2}(\alpha,V,\lambda,\phi,p,\varepsilon)>0$ such that for  $T\geqslant T_{2}$,
    \begin{equation*}
       \sup_{\theta \in \mathbb{T}} \langle |X_{H_{\theta}}|^p_{\phi} \rangle (T) 
       \leqslant \exp \bigg( p\kappa (C_{0}+\varepsilon)^{\gamma} (\log\log T)^{\gamma}\bigg).
   \end{equation*}
\end{theorem}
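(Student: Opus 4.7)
The plan is to mirror the structure used for Theorem~\ref{qdDC}, replacing the sublinear semi--algebraic discrepancy bound \eqref{sl} by the weaker but still effective discrepancy estimate appropriate to the Liouville regime. The starting point is the general criterion (Theorem~\ref{General Transport Bound}), which converts the existence of a single suitably placed box at scale $N$ on which the Green's function has the expected exponential decay into an upper bound of the form $\langle |X_{H_\theta}|^p_\phi\rangle(T) \lesssim N^p$, for $T$ in an appropriate range. Consequently, everything reduces to: (i) identifying, for each $\theta\in\mathbb T$, a scale $N=N(T)$ and a site $n_0$ with $|n_0|\lesssim T$ such that the box of size $N$ around $n_0$ is ``good'' for $H_\theta$; and (ii) optimizing $N$ against $T$.

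First I would set up the bad--phase set. The hypothesis $L(E)>\lambda>0$ on the full spectrum together with the Bourgain--Goldstein--type large deviation theorem produces, at each scale $N$, a semi--algebraic exceptional set $\Theta_N\subset \mathbb T$ of complexity polynomial in $N$ and Lebesgue measure at most $e^{-cN}$, with $c=c(\lambda,V)>0$, such that whenever $\theta+n_0\alpha\notin\Theta_N$, the relevant Green's function on the box of side $N$ centered at $n_0$ satisfies the required exponential decay. Thus the problem becomes finding such an $n_0$ with $|n_0|\lesssim T$ so that $\theta+n_0\alpha$ avoids $\Theta_N$.

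The second step is the discrepancy estimate for the sequence $\{n\alpha\}$ on the semi--algebraic set $\Theta_N$ under the weakly Liouville condition $\|n\alpha\|_\mathbb T\geq \eta e^{-\kappa(\log|n|)^\gamma}$. Using the cell decomposition of semi--algebraic sets into polynomially many intervals, combined with an Erd\H{o}s--Tur\'an / three--distance analysis driven by the continued--fraction denominators of $\alpha$, I would show that the orbit $\{\theta+n\alpha\}_{|n|\le M}$ cannot be entirely trapped inside $\Theta_N$ once $M$ exceeds some threshold $M_0(N)$ with $\log M_0(N)\lesssim \kappa(\log N)^\gamma$. Inverting this relation, one is allowed to push the good scale up to $N\sim \exp(\kappa(C_0+\varepsilon)^\gamma(\log\log T)^\gamma)$ while still guaranteeing $M_0(N)\le T$; inserting this $N$ into the criterion of Theorem~\ref{General Transport Bound} yields
\begin{equation*}
    \sup_{\theta\in\mathbb T}\langle |X_{H_\theta}|^p_\phi\rangle(T)\leq \exp\bigl(p\kappa(C_0+\varepsilon)^\gamma(\log\log T)^\gamma\bigr),
\end{equation*}
as claimed.

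The main obstacle is precisely the discrepancy step. In the Diophantine case, the sublinear bound \eqref{sl} follows easily because the continued--fraction denominators $q_k$ of $\alpha$ grow at most polynomially in their index; for Liouville $\alpha$ this is false, and one must exploit the precise polylogarithmic control $\|n\alpha\|_\mathbb T\geq \eta e^{-\kappa(\log|n|)^\gamma}$ to show that, although the orbit may linger anomalously long near $\Theta_N$, it nevertheless escapes once the scale is sufficiently large. Fixing the correct quantitative tradeoff between the semi--algebraic complexity and measure of $\Theta_N$ on one hand and the escape time $M_0(N)$ on the other is what pins down the $(\log\log T)^\gamma$ scaling appearing in the theorem, and is the only place in the argument where the numerology of the weakly Liouville hypothesis really intervenes.
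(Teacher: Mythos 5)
Your proposal follows essentially the same route as the paper's proof: the general criterion of Theorem~\ref{General Transport Bound}, a large deviation theorem valid in the Liouville regime (note $\beta(\alpha)=0$ here, so the LDT of Section~\ref{LargeDeviationTheoremForLiouvilleFrequency} applies) producing a semi-algebraic bad set $\Theta_N$ of exponentially small measure and polynomial complexity, an Erd\H{o}s--Tur\'an-type discrepancy bound giving the escape threshold $\log M_0(N)\lesssim \kappa(\log N)^{\gamma}$, and the resulting scale optimization. The one imprecision is in the final bookkeeping: the correct constraint is not ``$M_0(N)\le T$'' but that at each position scale $P$ a good box of size $\Psi(P)=\exp[\delta(\log P)^{1/\gamma}]$ must be found inside the window $[P/4,P/2]$ (so $M_0(\Psi(P))\le P/4$, forcing $\delta\le \kappa^{-1/\gamma}/C_0$), after which the criterion's cutoff $\Psi(R)\sim\log T$ yields exactly the claimed $\exp\bigl(p\kappa(C_0+\varepsilon)^{\gamma}(\log\log T)^{\gamma}\bigr)$.
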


\begin{theorem}\label{qdLiou}
    Let $V\in C^{\omega}(\mathbb{T},\mathbb{R})$ be non-constant and assume $L(E)>\lambda>0$ for all $E\in\mathbb{R}$. Let $\eta>0,\kappa>0,0<\gamma<\frac{1}{C_{0}}$. Suppose that $\alpha\in\mathbb{R}$ satisfies
    \begin{equation*}
        \| n\alpha\|_{\mathbb{T}}\geqslant \eta e^{-\kappa |n|^{\gamma}}\quad \text{for all}\ n\in\mathbb{Z}\setminus\{0\}.
    \end{equation*}
    Then for any $p>0,\varepsilon>0$ and $\phi\in\ell^{2}(\mathbb{Z})$ compactly supported, there exists $T_{3}(\alpha,V,\lambda,\phi,p,\varepsilon)>0$ such that for  $T\geqslant T_{3}$,
    \begin{equation*}
        \sup_{\theta\in\mathbb{T}} \langle |X_{H_{\theta}}|_{\phi}^{p}\rangle (T)\leqslant \exp \bigg(p (\log T)^{C_{0}\gamma+\varepsilon}\bigg).
    \end{equation*}
\end{theorem}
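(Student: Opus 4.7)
The plan is to combine the general criterion for quantum dynamical upper bounds (Theorem \ref{General Transport Bound}) with the Liouville-adapted semi-algebraic discrepancy estimate developed earlier in the paper, applied at the scale dictated by the Liouville exponent $\gamma$. The standing hypotheses that $V \in C^\omega(\mathbb{T},\mathbb{R})$ is non-constant and that $L(E) > \lambda > 0$ for all $E$ enter only through a standard large-deviation theorem, which guarantees that the set $\Theta_n \subset \mathbb{T}$ of phases $\theta$ for which the $n$-scale finite-volume Green's function $(H_\theta|_{[a,a+n-1]} - E)^{-1}$ fails to decay at essentially the Lyapunov rate is semi-algebraic of controlled complexity and has measure $|\Theta_n| \leq e^{-cn}$, uniformly in $a$ and in $E$ from a compact interval containing the spectrum.

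By Theorem \ref{General Transport Bound}, to obtain a dynamical bound of order $M^p$ at time $T$ it is enough to exhibit, for each $\theta$ and each relevant $E$, one interval $[a,b] \subset \mathbb{Z}$ of length $n \gtrsim \log T$ lying within $[-M,M]$ around the support of $\phi$ on which the Green's function decays exponentially at rate close to $\lambda$. The target bound $\exp\bigl(p(\log T)^{C_0 \gamma + \varepsilon}\bigr)$ therefore suggests the choice $n = \lceil c \log T \rceil$ and $M = \exp\bigl((\log T)^{C_0 \gamma + \varepsilon}\bigr)$. Producing a good box of length $n$ in the window amounts to locating $k$ with $|k| \leq M$ such that $\theta + (k+j)\alpha \notin \Theta_n$ for every $0 \leq j < n$, which by translation is equivalent to finding $k$ such that $\theta + k\alpha$ avoids the enlarged semi-algebraic set $\bigcup_{0 \leq j < n}(\Theta_n - j\alpha)$.

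At this point I would invoke the Liouville version of the discrepancy estimate established in the earlier part of the paper: under $\|m\alpha\|_{\mathbb{T}} \geq \eta e^{-\kappa|m|^\gamma}$, the number of $|k|\leq M$ with $\theta + k\alpha \in \Theta_n$ is at most $M^{1-\delta}$ for some $\delta = \delta(\varepsilon) > 0$, provided $n$ and $M$ are coupled by $M \leq \exp(n^{C_0\gamma + \varepsilon})$. The hypothesis $0 < \gamma < 1/C_0$ is exactly what forces this exponent to remain strictly below $1$, keeping the coupling nontrivial. Since each visit of $\{\theta + k\alpha\}$ to $\Theta_n$ blocks at most $O(n)$ consecutive values of $k$, this discrepancy bound, combined with $M^{1-\delta} \cdot n \ll M$, implies that some block of $n$ consecutive $k \in [-M,M]$ avoids $\Theta_n$ entirely. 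Feeding this good box into Theorem \ref{General Transport Bound} yields $\sup_\theta \langle |X_{H_\theta}|_\phi^p\rangle(T) \lesssim M^p$, and absorbing multiplicative constants into the $\varepsilon$-slack in the exponent gives the stated bound for $T$ sufficiently large.

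The main obstacle is the discrepancy input in Step 3: the sublinear semi-algebraic bound \eqref{sl} that drives the Diophantine arguments of \cite{MR4564259,prema,LPW} and of Theorem \ref{qdDC} is simply false for Liouville $\alpha$, so one is forced into a much weaker "one good block" regime. The required estimate should combine a three-distance/continued-fraction analysis of the Kronecker sequence $\{k\alpha\}$ with the Liouville lower bound $\|m\alpha\|_{\mathbb{T}} \geq \eta e^{-\kappa|m|^\gamma}$ and a covering of the semi-algebraic set $\Theta_n$ by short arcs whose number is polynomial in $n$; carefully tracking how $\kappa$ and $\gamma$ propagate through this covering into the window size $M(T)$ is what pins down the constant $C_0$ and in particular the admissible range $0 < \gamma < 1/C_0$.
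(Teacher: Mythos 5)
Your overall skeleton (large deviation theorem producing a semi-algebraic bad set $\Theta_n$ of measure $e^{-cn}$ and controlled degree, a discrepancy count of bad shifts, one good box fed into Theorem~\ref{General Transport Bound}, with window $M=\exp((\log T)^{C_0\gamma+\varepsilon})$ and box length $n\asymp\log T$) is the same as the paper's. But the quantitative core of your Step~3 is wrong, and the error propagates. You assert that the number of $|k|\leqslant M$ with $\theta+k\alpha\in\Theta_n$ is at most $M^{1-\delta}$. This sublinear bound is exactly what fails for Liouville $\alpha$ (as you yourself note in your last paragraph, contradicting your Step~3). What the paper actually proves (Corollary~\ref{cor_logN} plus Theorem~\ref{semidks}) is only a logarithmic saving: $D_M(\alpha)\lesssim(\log M)^{-1/\gamma}$, hence the count of bad shifts is $\lesssim(\deg\Theta_n)^{C}\,M\,(\log M)^{-1/\gamma}\lesssim M(\log M)^{5C/\delta-1/\gamma}$ with $\delta=5C\gamma+\varepsilon$, which is merely $o(M)$ --- specifically $M(\log M)^{-\varepsilon/(\gamma\delta)}$. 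The condition $\gamma<1/C_0$ is precisely what makes the exponent $5C/\delta-1/\gamma$ negative; it has nothing to do with keeping a sublinear exponent below $1$.

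This matters because of your second deviation: you require a run of $n$ consecutive good shifts ($\theta+(k+j)\alpha\notin\Theta_n$ for all $0\leqslant j<n$), arguing via ``$M^{1-\delta}\cdot n\ll M$''. That requirement is unnecessary --- $\Theta_n$ is already the bad set for the \emph{entire} length-$n$ box, so by covariance a single $k$ with $\theta+k\alpha\notin\Theta_n$ gives decay of $G_{[k,k+n-1]}$, which is all Theorem~\ref{General Transport Bound} asks for. Worse, with the correct count the extra factor $n\asymp\log T=(\log M)^{1/\delta}$ destroys your margin: you would need $(\log M)^{1/\delta}\ll(\log M)^{\varepsilon/(\gamma\delta)}$, i.e.\ $\varepsilon>\gamma$, so your argument cannot deliver the theorem for arbitrarily small $\varepsilon$. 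Finally, you defer the discrepancy input as ``the main obstacle'' to be attacked by a three-distance analysis; the paper instead derives it from the Erd\H{o}s--Tur\'an--Koksma inequality together with the pigeonhole counting Lemma~\ref{FixedPoints} (Theorem~\ref{dks}), and it is the degree bound $\deg\Theta_n\lesssim n^{5}$ combined with the exponent $C=C(1)$ of the covering Lemma~\ref{covers} --- not the propagation of $\kappa$ --- that pins down $C_0=5C$ and the admissible range $\gamma<1/C_0$. To repair the proof: drop the consecutive-run requirement, replace $M^{1-\delta}$ by the $(\log M)$-power saving above, and cite Theorem~\ref{LDTgreen}, Corollary~\ref{cor_logN}, Theorem~\ref{semidks}, and Corollary~\ref{ca3} with $\Psi(N)=(\log N)^{1/\delta}$.
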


\begin{remark}
    In Theorem \ref{qdDC}, \ref{qdWDC}, and \ref{qdLiou}, the constant $C_{0}=5C$, where $C \geqslant 1$ is the constant $C(d)$ from Lemma \ref{covers} when $d=1$.
\end{remark}
Theorem~\ref{qdDC} is not new; it was previously established in \cite{MR4288185, MR4564259, JPo, prema}. In fact, those works provide stronger versions with explicit estimates on the constants, which our approach does not yield. However, we include the result here to demonstrate the flexibility and effectiveness of our method.  Theorems~\ref{qdWDC} and~\ref{qdLiou} are new. In the same setting, we note that Damanik–Tcheremchantsev \cite{MR2291919, MR2464194} proved an upper bound $T^{\varepsilon}$. Theorems~\ref{qdWDC} and~\ref{qdLiou} thus provide quantitative estimates for this $T^{\varepsilon}$ in the case of Liouville frequencies.

The existence of a good box is ensured by the discrepancy estimates for semi-algebraic sets in Section \ref{DiscrepancyEstimates}. To derive the Green's function estimates on the good box, we apply the large deviation theorem for transfer matrices established in \cite{MR4373222}, which remains valid in the Liouville setting.

\section{Preliminaries}
\label{prelim}
% Many of tools used throughout this paper stem from various studies. For the reader's convenience, we provide some necessary background and well-known results. 

% \subsection{Continued fractions}

% Given $\alpha \in \R$, denote the fractional part of $\alpha$ as $\{\alpha\}$. For brevity, suppose that $\alpha \in [0, 1]$. Further, define the continued fraction expansion

% \[
% \alpha = [\alpha_1,\alpha_2, ...] =\frac{1}{\alpha_1+\frac{1}{\alpha_2+...}}
% \]

% where $\alpha_i  \in  \N$ for all $i  \in \N$. Define the $n$-th order convergent of $\alpha$

% \[
% a_n := \frac{p_n}{q_n}=[\alpha_1, \alpha_2,  ...,  \alpha_n]
% \]

% Define the upper exponent 

% \begin{equation}
%     \beta(\alpha) = \limsup \frac{\log q_{n+1}}{q_n}.
% \end{equation}

\subsection{Transfer matrix and Lyapunov exponent}
Denote by $C^{\omega}_{h}(\mathbb{T},\mathbb{R})$ the space of real-valued bounded analytic functions on the strip $\{\theta : |\Im \theta| < h\}$. For any $V \in C^{\omega}_{h}(\mathbb{T},\mathbb{R})$, define 
\begin{equation*}
    \|V\|_{h} = \sup_{|\Im \theta| < h} |V(\theta)|.
\end{equation*}
Let $C^{\omega}(\mathbb{T},\mathbb{R}) := \bigcup_{h > 0} C^{\omega}_{h}(\mathbb{T},\mathbb{R})$.

Let $E\in\mathbb{R}$. Denote
\begin{equation*}
    S_{E}^{ V}(\theta) := \begin{pmatrix}
        E -  V(\theta) & -1 \\
        1 & 0
    \end{pmatrix}.
\end{equation*}
For any finite interval $\Lambda = [x_{1}, x_{2}] \subseteq \mathbb{Z}$ with $x_{1} < x_{2}$, define the transfer matrix from $x_{1}$ to $x_{2}$ as
\begin{equation*}
    M_{\Lambda}(\theta) := \prod_{k = x_{2} - 1}^{x_{1}} S_{E}^{ V}(\theta + k\alpha).
\end{equation*}
For $H_{\theta}$ defined in \eqref{H}, we let $H_{\Lambda}(\theta):=H_{\Lambda}(V,\theta) := R_{\Lambda} H_{\theta} R_{\Lambda}$, where $R_{\Lambda}$ is the projection onto $\Lambda$. In particular, for $\Lambda = [0, N-1]$, denote $H_{N}(\theta) := H_{[0, N-1]}(\theta)$ and $M_{N}(\theta) := M_{[0, N-1]}(\theta)$.

It is straightforward to verify that
\begin{equation*}
    M_{[x_{1}, x_{2}]}(\theta) = M_{x_{2} - x_{1} + 1}(\theta + x_{1}\alpha).
\end{equation*}
For $N \geqslant 1$, it is well-known (see \cite{MR2100420}) that
\begin{equation}\label{four}
    M_{N}(\theta) = \begin{pmatrix}
        \det(H_{N}(\theta) - E) & -\det(H_{N-1}(\theta + \alpha) - E) \\
        \det(H_{N-1}(\theta) - E) & -\det(H_{N-2}(\theta + \alpha) - E)
    \end{pmatrix},
\end{equation}
with the convention $\det(H_{0}(\theta) - E) := 1$ and $\det(H_{-1}(\theta) - E) := -1$.

% Define the finite-scale Lyapunov exponent as
% \begin{equation*}
%     L_{N}(E) := \frac{1}{N} \int_{\mathbb{T}} \log \| M_{N}(\theta) \| \, d\theta,
% \end{equation*}
% and the Lyapunov exponent as
% \begin{equation*}
%     L(E) := \lim_{N \to \infty} L_{N}(E).
% \end{equation*}
Define the Lyapunov exponent as
\begin{equation*}
    L(E)=\lim_{N \to \infty}\frac{1}{N} \int_{\mathbb{T}} \log \| M_{N}(\theta) \| \, d\theta.
\end{equation*}
By the unique ergodicity, one has
\begin{equation}\label{fur}
    L(E)=\lim_{|\Lambda| \to \infty} \sup_{\theta \in \mathbb{T}} \frac{1}{|\Lambda|} \log \|M_{\Lambda}(\theta)\|.
\end{equation}
% \begin{lemma}[\cite{MR1484541}]\label{fur}
% Let $\alpha \in \mathbb{R} \setminus \mathbb{Q}$ and $V \in C^{\omega}(\mathbb{T},\mathbb{R})$. Then
% \begin{equation*}
%     \lim_{|\Lambda| \to \infty} \sup_{\theta \in \mathbb{T}} \frac{1}{|\Lambda|} \log \|M_{\Lambda}(\theta)\| = L(E).
% \end{equation*}
% \end{lemma}
According to the Thouless formula, the positivity of $L(E)$ for all $E\in\mathbb{R}$ yields that $L(z) > 0$ for all $z \in \mathbb{C}$. 
% The following lemma provides a criterion for the positivity of the Lyapunov exponent.

% \begin{lemma}[\cite{MR2177191}]\label{ple}
% Let $\alpha \in \mathbb{R} \setminus \mathbb{Q}$. Suppose $V \in C^{\omega}(\mathbb{T}, \mathbb{R})$ is non-constant. Then there exists $\lambda_{0}(V) > 0$ such that for all $\lambda > \lambda_{0}$,
% \begin{equation*}
%     L(E) > \frac{1}{2} \log \lambda \quad \text{for all } E\in\mathbb{R}.
% \end{equation*}
% \end{lemma}
% \begin{remark}
%     According to the Thouless formula, Lemma \ref{ple} in fact yields that $L(z) > \frac{1}{2} \log \lambda$ for all $z \in \mathbb{C}$.
% \end{remark}

\subsection{Green's Function}
Denote by $\sigma(H)$ the spectrum of $H$. For $z \notin \sigma(H)$, the Green's function of $H$ at $z$ is defined by
\begin{equation*}
    G(z) := (H - z I)^{-1},
\end{equation*}
and for a finite interval $\Lambda$, define
\begin{equation*}
    G_{\Lambda}(z) := (R_{\Lambda}HR_{\Lambda} - z I)^{-1}.
\end{equation*}
In particular, for $H_{\theta}$ defined in \eqref{H} with $\Lambda = [x_{1}, x_{2}]$ with $x_{1} < x_{2}$ and $x_{1} \leqslant m \leqslant n \leqslant x_{2}$, by Cramer's rule  we have
\begin{equation*}
    G_{\Lambda}(z,\theta)(m,n) = \frac{ \det(H_{m - x_{1}}(\theta + x_{1}\alpha) - z) \cdot \det(H_{x_{2} - n}(\theta + (n+1)\alpha) - z) }{ \det(H_{\Lambda}(\theta) - z) }.
\end{equation*}
Using \eqref{four}, it follows that
\begin{equation}\label{cramer}
    |G_{\Lambda}(z,\theta)(m,n)| \leqslant \frac{ \|M_{[x_{1},m-1]}(\theta)\| \cdot \|M_{[n+1,x_{2}]}(\theta)\| }{ |\det(H_{\Lambda}(\theta) - z)| }.
\end{equation}

\subsection{Semi-algebraic sets}

\begin{definition}[{\cite{MR2100420}}] \label{defsemi}
	We say $\mathcal{S}\subseteq \mathbb{R}^{d}$ is a semi-algebraic set if it is a finite union of sets defined by a finite number of polynomial inequalities. More precisely, let $\{P_{1}, P_{2}, \cdots, P_{s}\}$ be a family of real polynomials to the variables $x=(x_{1}, x_{2}, \cdots, x_{d})$ with $\deg(P_{i})\leqslant  q$ for $i=1,2,\cdots, s$. A (closed) semi-algebraic set $\mathcal{S}$ is given by the expression
	\begin{equation}\label{sas}
		\mathcal{S}=\bigcup_{j} \bigcap_{\ell\in \mathcal{L}_{j}} \{x\in \mathbb{R}^{d}: P_{\ell}(x) \ \varsigma_{j\ell} \ 0\},
	\end{equation}
	where $\mathcal{L}_{j}\subseteq \{1,2,\cdots, s\}$ and $\varsigma_{j\ell}\in \{\geqslant, \leqslant , =\}$. Then we say that the degree of $\mathcal{S}$, denoted by $\deg(\mathcal{S})$, is at most $sq$. In fact, $\deg(\mathcal{S})$ means the smallest $sq$ overall representation as in \eqref{sas}.
\end{definition}

% The following lemma has been stated in \cite{MR2100420}, where the author mentioned that it follows from the Yomdin–Gromov triangulation theorem \cite{MR880035, MR889980}. For the history and the complete proof of the Yomdin–Gromov triangulation theorem, see \cite{MR3990603}. 

\begin{lemma}[{\cite{MR2100420,MR880035, MR889980,MR3990603}}\label{covers}]
	Let $\mathcal{S}\subseteq [0,1]^{d}$ be a semi-algebraic set of degree $B$. Let $\epsilon>0$ be a small number and $\leb(\mathcal{S})\leqslant  \epsilon^{d}$. Then $\mathcal{S}$ can be covered by a family of $\epsilon$-balls with total number less than $B^{C(d)} \epsilon^{1-d}$.
\end{lemma}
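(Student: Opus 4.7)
My plan is to prove the lemma by induction on the ambient dimension $d$, slicing by coordinate hyperplanes to drop one dimension at a time. The key combinatorial input I will use throughout is the Milnor--Thom--Warren bound: a semi-algebraic set in $[0,1]^{k}$ of degree at most $B$ has at most $B^{O(k)}$ connected components. In particular, for $k=1$ this specialises to the fact that a semi-algebraic set in $[0,1]$ of degree $\leqslant B$ is a union of $O(B)$ intervals, which drives the base case.

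For the base case $d=1$, $\mathcal{S}\subseteq[0,1]$ decomposes into $O(B)$ disjoint intervals of total length $\leqslant \epsilon$. Each interval of length $\leqslant \epsilon$ needs only $O(1)$ $\epsilon$-balls, contributing $O(B)$ balls overall, while the longer intervals together contribute only $O(\epsilon/\epsilon) = O(1)$ balls. The total is $O(B)$, which matches $B^{C(1)}$ for some absolute constant $C(1)$.

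For the inductive step, fix $d\geqslant 2$ and assume the lemma for $d-1$. Partition $[0,1]$ into $N := \lceil 1/\epsilon \rceil$ subintervals $I_{k}$ of length $\epsilon$, and look at the fibers $\mathcal{S}_{t} := \{x' \in [0,1]^{d-1} : (x',t)\in \mathcal{S}\}$, each semi-algebraic in $[0,1]^{d-1}$ of degree at most $B$ (substituting $x_{d} = t$ does not increase degree). Call $I_{k}$ large if the $I_{k}$-average of $\leb_{d-1}(\mathcal{S}_{t})$ exceeds $\epsilon^{d-1}$, and small otherwise. A single application of Fubini combined with the measure hypothesis $\leb(\mathcal{S})\leqslant \epsilon^{d}$ forces the number of large $I_{k}$'s to be $O(1)$, since each such $I_{k}$ contributes at least $\epsilon\cdot\epsilon^{d-1}=\epsilon^{d}$ to the total integral; cover the slab $[0,1]^{d-1}\times I_{k}$ above each large $I_{k}$ by the trivial grid of $O(\epsilon^{1-d})$ balls. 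On each small $I_{k}$, pick a representative $t_{k}$ with $\leb_{d-1}(\mathcal{S}_{t_{k}}) \leqslant \epsilon^{d-1}$; by induction $\mathcal{S}_{t_{k}}$ is covered by $B^{C(d-1)}\epsilon^{2-d}$ balls of radius $\epsilon$ in $[0,1]^{d-1}$, and thickening these to cylinders of height $\epsilon$ (one per small $I_{k}$) gives at most $N\cdot B^{C(d-1)}\epsilon^{2-d} \leqslant B^{C(d-1)}\epsilon^{1-d}$ balls. Summing over large and small pieces yields the claim with $C(d) = C(d-1) + O(1)$.

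The main technical obstacle is that thickening the cover of a single fiber $\mathcal{S}_{t_{k}}$ to a cylinder need not a priori cover $\mathcal{S}\cap ([0,1]^{d-1}\times I_{k})$, because the fiber can jump with $t$. The semi-algebraic structure is what ultimately rescues the argument: the family $\{\mathcal{S}_{t}\}_{t\in [0,1]}$ has only $B^{O(d)}$ topological types as $t$ varies (by Hardt's triviality or by Yomdin--Gromov parametrisation), so one refines the partition of $[0,1]$ so that within each refined piece the fibers vary in a Hausdorff-controlled way. The cleanest way to implement this in the literature is the Yomdin--Gromov parametrisation: $\mathcal{S}$ (after subdivision) decomposes into $B^{O(d)}$ Lipschitz images of $[0,1]^{k}$ ($k\leqslant d$) with Lipschitz constants bounded by some constant depending only on $d$, from which the covering estimate reads off directly against the volume bound $\leb(\mathcal{S})\leqslant \epsilon^{d}$.
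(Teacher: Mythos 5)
The paper does not actually prove Lemma~\ref{covers}; it is quoted as a known result from \cite{MR2100420,MR880035,MR889980,MR3990603}, so there is no in-paper argument to compare against. Judged on its own terms, your inductive slicing argument has a genuine gap, and it is exactly the one you flag at the end but do not repair. The step ``cover the fiber $\mathcal{S}_{t_k}$ and thicken to a cylinder over $I_k$'' fails badly even for the simplest sets. Take $d=2$ and $\mathcal{S}=\{(x_1,x_2)\in[0,1]^2 : x_2=\epsilon x_1\}$, which has degree $O(1)$ and measure $0\leqslant\epsilon^2$. Every fiber $\mathcal{S}_t$ is a single point, so every $I_k$ is ``small'' and your cover of $\mathcal{S}_{t_k}$ consists of one ball; but $\mathcal{S}\cap([0,1]\times I_1)$ is the entire segment, which needs $\sim\epsilon^{-1}$ balls. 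The proposed rescue via Hardt triviality does not work: over $[0,\epsilon)$ the fiber type in this example is constant (one point), yet the fiber moves at Hausdorff speed $1/\epsilon$, so finiteness of topological types gives no ``Hausdorff-controlled'' variation. Bounding how fast fibers move is precisely the hard quantitative content of the lemma, and the induction as set up does not supply it.

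Your final sentence falls back on the Yomdin--Gromov algebraic lemma, but that is not a patch for the slicing induction --- it is a different (and standard) proof strategy, and invoking it wholesale replaces the claimed elementary argument with a theorem at least as deep as the one being proved. Even granting it, the estimate does not ``read off directly'': a Lipschitz image of $[0,1]^d$ (a full-dimensional piece) is covered a priori by $\epsilon^{-d}$ balls, not $\epsilon^{1-d}$, so the measure hypothesis must still be used, typically by noting that at most $\leb(\mathcal{S})/\epsilon^d\leqslant 1$ grid cubes lie entirely inside $\mathcal{S}$ and that every other cube meeting $\mathcal{S}$ meets the boundary $\partial\mathcal{S}$, a semi-algebraic set of dimension at most $d-1$ and degree $B^{O(1)}$ to which the $\epsilon^{1-d}$ entropy bound for lower-dimensional pieces applies. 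The base case $d=1$ and the Fubini pigeonholing of ``large'' intervals are fine; the inductive step as written is not, and a correct write-up would have to either prove a quantitative fiber-variation estimate or restructure the argument around the boundary-plus-parametrization scheme just described.
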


\subsection{Discrepancy}
\begin{definition}[\cite{MR1470456}]\label{defdis}
    Let $\{x_n\}_{n=1}^{N}$ be a sequence in $\mathbb{T}^d$. The discrepancy of $x_{n}$ is defined as
    \begin{equation*}
        D_N(x_n)=\sup_{I\in R} \bigg|\frac{\#\{1\leqslant n\leqslant N: x_{n}\in I\}}{N}-\leb(I)\bigg|,
    \end{equation*}
    where $R$ denotes the family of all axis-aligned rectangles in $\mathbb{T}^d$.
\end{definition}

In particular, for $x_{n}=\theta+n\alpha \bmod \mathbb{Z}^{d}$, we denote by $D_{N}(\alpha)$ the discrepancy for short (which is independent of $\theta$ by Definition \ref{defdis}). 

The Erd\H{o}s–Tur\'{a}n–Koksma inequality provides an upper bound for $D_{N}(x_{n})$.

\begin{theorem}[{\cite[Erd\H{o}s–Tur\'{a}n–Koksma Inequality]{MR1470456}}]\label{ETI}
    Let $\{x_n\}_{n=1}^{N}$ be a sequence in $\mathbb{T}^d$ and $M \in \mathbb{N}$ be arbitrary. Then
    \begin{equation*}
        D_N(x_{n}) \leqslant \bigg(\frac{3}{2}\bigg)^d \bigg(\frac{2}{M+1} + \sum_{0< |m| < M} \frac{1}{r(m)} \bigg|\frac{1}{N} \sum_{n = 1}^N e^{2\pi i \langle m,x_{n}\rangle}\bigg|\bigg),
    \end{equation*}
    where $r(m)=\prod_{i=1}^{d}\max \{1,|m_{i}|\}$ and  $|m|=\max_{1\leqslant  i\leqslant  d} |m_{i}|$.
\end{theorem}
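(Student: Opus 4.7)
The plan is to prove the inequality by Fourier-analytic approximation: sandwich the indicator of each axis-aligned rectangle between trigonometric polynomials of coordinate-degree at most $M-1$ and then compare Birkhoff sums to integrals.

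My first step would be the one-dimensional construction. For each interval $J=[a,b]\subseteq\mathbb{T}$, I would exhibit trigonometric polynomials $F_J^{\pm}$ of degree at most $M-1$ with $F_J^- \leq \chi_J \leq F_J^+$ pointwise, mean-value gap $\int_{\mathbb{T}}(F_J^+ - F_J^-)\,dt \leq 2/(M+1)$, and Fourier coefficients satisfying $|\widehat{F_J^{\pm}}(h)|\leq 1/\max(1,|h|)$ uniformly in $h$. These are the standard extremal majorants and minorants of an indicator, built either from the Beurling--Selberg function or, equivalently, from a suitable periodization of a Fejér-type kernel together with a one-sided correction, chosen so that the $F_J^\pm$ remain nonnegative.

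Next I would tensorize. For $I=\prod_{j=1}^d J_j \subseteq \mathbb{T}^d$, I would form $P^{\pm}(x)=\prod_{j=1}^d F_{J_j}^{\pm}(x_j)$, with sign choices made so that $P^- \leq \chi_I \leq P^+$ pointwise (possible because the factors may be taken nonnegative). The product structure of the Fourier transform immediately gives $|\widehat{P^{\pm}}(m)|\leq 1/r(m)$ for $0<|m|_\infty < M$, and a short arithmetic bookkeeping yields
\begin{equation*}
\int_{\mathbb{T}^d}(P^+ - P^-)\,dx \leq \Bigl(\tfrac{3}{2}\Bigr)^d \cdot \tfrac{2}{M+1},
\end{equation*}
where the constant $(3/2)^d$ comes from uniformly bounding $(a+1/(M+1))^d - (a-1/(M+1))^d$ for $a\in[0,1]$.

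Finally I would combine these ingredients. Writing
\begin{equation*}
\frac{\#\{n \leq N : x_n \in I\}}{N} - \leb(I) \leq \frac{1}{N}\sum_{n=1}^N P^+(x_n) - \int_{\mathbb{T}^d} P^+\,dx + \int_{\mathbb{T}^d}(P^+ - \chi_I)\,dx,
\end{equation*}
and the symmetric lower bound with $P^-$, I would expand $P^+$ in Fourier series and peel off the $m=0$ term, which cancels with $\int P^+\,dx$. The remaining difference becomes $\sum_{0<|m|<M}\widehat{P^+}(m)\cdot \tfrac{1}{N}\sum_n e^{2\pi i\langle m,x_n\rangle}$; inserting $|\widehat{P^+}(m)|\leq 1/r(m)$ produces the stated exponential-sum term, while the integrated error contributes the $2/(M+1)$ piece, both scaled by $(3/2)^d$ from the product bookkeeping. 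Taking the supremum over $I$ yields the theorem. The main obstacle is the one-dimensional construction itself: simultaneously achieving the mean-value bound $2/(M+1)$, the Fourier decay $1/|h|$, and pointwise nonnegativity requires a carefully designed extremal function rather than a naive Fejér convolution, and the sign/positivity bookkeeping in the tensorization step must be done with care to keep $P^{\pm}$ on the correct side of $\chi_I$.
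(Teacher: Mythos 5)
The paper does not prove this statement; it is quoted verbatim as a classical result from the cited reference (Drmota--Tichy), so there is no in-paper proof to compare against. Your route --- extremal one-dimensional majorants/minorants of Beurling--Selberg/Vaaler type, tensorization, and Fourier expansion with the $m=0$ term cancelling against the volume --- is precisely the standard proof in that reference, and the overall architecture is sound.

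That said, two quantitative claims in your outline are not right as stated. First, the one-dimensional Fourier bound $|\widehat{F_J^{\pm}}(h)|\leqslant 1/\max(1,|h|)$ is too strong: the Selberg/Vaaler extremal functions of degree $\leqslant M$ satisfy $\widehat{F_J^{+}}(0)=\leb(J)+\tfrac{1}{M+1}$ (which can exceed $1$) and, for $0<|h|\leqslant M$, only $|\widehat{F_J^{\pm}}(h)|\leqslant \tfrac{1}{M+1}+\tfrac{1}{\pi|h|}$, which is bounded by $\tfrac{3}{2}\cdot\tfrac{1}{\max(1,|h|)}$ but not by $\tfrac{1}{\max(1,|h|)}$ when $|h|$ is comparable to $M$. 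This is in fact where the factor $(3/2)^d$ originates (one factor of $3/2$ per coordinate in $|\widehat{P^{\pm}}(m)|\leqslant (3/2)^d/r(m)$), not primarily in the volume bookkeeping $\prod_j(a_j+\tfrac{1}{M+1})-\prod_j(a_j-\tfrac{1}{M+1})$ as you assert; if your claimed coefficient bound were available, the exponential-sum term would come with a better constant than $(3/2)^d$. Second, there is an off-by-one tension: by Vaaler's extremality, majorant/minorant pairs of degree $\leqslant M-1$ have mean-value gap at least $2/M$, not $2/(M+1)$; you should take degree $\leqslant M$ and sum over $0<|m|\leqslant M$ (the paper's range $0<|m|<M$ paired with $2/(M+1)$ is itself a slight misquotation of the standard statement). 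Both issues are repairable with the correct Selberg bounds and careful bookkeeping, and they do not invalidate the approach, but as written the key one-dimensional lemma you invoke is not achievable.
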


Throughout the paper, we write $A \lesssim_{*} B$ to indicate that there exists a constant $C > 0$ depending on $*$ such that $A \leqslant C B$. 

\section{Criterion for quantum dynamics}\label{CriterionForQuantumDynamics}
In this section, we establish a criterion for quantum dynamical upper bounds based on Green's function estimates on suitable scales. The following criterion works for bounded self-adjoint long-range operators on $\ell^{2}(\mathbb{Z})$.

\begin{theorem}\label{General Transport Bound}
Let $\{V_n\}\in \ell^{\infty}(\mathbb{Z})$ be a real sequence, and consider the long-range operator
\begin{equation*}
    (H\psi)_n = \sum_{m=-\infty}^{\infty} A_m \psi_{n-m} + V_n \psi_n,
\end{equation*}
where $\overline{A_m} = A_{-m}$ and $|A_m| \leqslant C_1 e^{-c_1 |m|}$ for all $m \in \mathbb{Z}$. Assume that $\sigma(H) \subseteq [-K+1, K-1]$ for some $K \geqslant 3$.

Let $\phi \in \ell^2(\mathbb{Z})$ be compactly supported. Suppose that for every $|N| \geqslant N_0$, there exists an interval $I \subseteq [-\frac{|N|}{2}, -\frac{|N|}{4}] \cap \mathbb{Z}$ or $I \subseteq [\frac{|N|}{4}, \frac{|N|}{2}] \cap \mathbb{Z}$ satisfying the following conditions:
\begin{enumerate}
    \item There exists a monotone increasing function $\Psi: \mathbb{R}^+ \rightarrow \mathbb{R}^+$ such that
    \begin{equation}\label{IntervalLargerThanF}
        |I| \geqslant \Psi(|N|) \geqslant (\log |N|)^{C_2},
    \end{equation}
    for some constant $C_2 > 1$.
    
    \item There exists $0<c_2\leqslant c_1$ such that for any $z=E+i\epsilon$ with $|E|\leqslant K$ and $0<\epsilon\leqslant \epsilon_{0}$,
    \begin{equation}\label{GreenExpDecay}
        |G_I(z)(m,n)| < e^{-c_2 |I|} \quad \text{for all } |m - n| > \frac{|I|}{20}.
    \end{equation}
\end{enumerate}

Then for any $p > 0$, there exists $T_0 = T_0(\phi, K, C_1, C_2, c_1, c_2, \epsilon_{0}, p)$ such that for all $T \geqslant T_0$,
\begin{equation*}
    \langle |X_H|^p_\phi \rangle (T) \leqslant \left[ \Gamma\left( \frac{80}{c_2} \log T \right) \right]^p,
\end{equation*}
where $\Gamma: \mathbb{R}^+ \rightarrow \mathbb{R}^+$ is the inverse function of $\Psi$.
\end{theorem}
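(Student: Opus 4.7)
I would follow the standard Damanik--Tcheremchantsev-type reduction: convert $\langle |X_{H}|^{p}_{\phi}\rangle(T)$ to a weighted $L^{2}$-integral of the resolvent via Parseval, truncate the position sum at a scale $N(T)$, and exploit the single good box $I_{n}$ guaranteed by hypothesis~(2) to force exponential decay of $\langle \delta_{n}, G(E+i/T)\phi\rangle$ for $|n|>N(T)$. The specific scale $N := \lceil \Gamma(\tfrac{80}{c_{2}}\log T) \rceil$ is designed so that the per-crossing decay rate $c_{2}/20$ beats the $O(\log T)$ growth from the ambient resolvent factors.

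\textbf{Reduction and small-$n$ half.}
Starting from the Parseval identity
\[
\frac{2}{T}\int_{0}^{\infty} e^{-2t/T}|\langle \delta_{n}, e^{-itH}\phi\rangle|^{2}\, dt = \frac{1}{\pi T}\int_{\mathbb{R}} |\langle \delta_{n}, G(E+i/T)\phi\rangle|^{2}\, dE,
\]
one writes
\[
\langle |X_{H}|^{p}_{\phi}\rangle(T) = \frac{1}{\pi T}\sum_{n\in\mathbb{Z}}|n|^{p}\int_{\mathbb{R}}|\langle \delta_{n}, G(E+i/T)\phi\rangle|^{2}\, dE,
\]
with effective integration range $|E|\leq K$ since $\sigma(H)\subseteq[-K+1, K-1]$. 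Split the outer sum at $|n|=N$. For $|n|\leq N$, Plancherel gives $\sum_{n}a(n,T)=\|\phi\|^{2}$, hence $\sum_{|n|\leq N}|n|^{p}a(n,T)\leq N^{p}\|\phi\|^{2}$.

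\textbf{Tail via the good box.}
For $|n|>N$, apply hypothesis~(2) at scale $|n|$ to obtain $I_{n}\subseteq[\pm|n|/4,\pm|n|/2]$ with $|I_{n}|\geq \Psi(|n|)$; by $\pm$-symmetry assume $I_{n}\subseteq[|n|/4,|n|/2]$, so $I_{n}$ separates $\operatorname{supp}\phi$ from site $n$ once $|n|$ is large. Decompose $H = H^{\mathrm{ext}} + B_{I_{n}}$ with $H^{\mathrm{ext}}:=R_{I_{n}}HR_{I_{n}}\oplus R_{I_{n}^{c}}HR_{I_{n}^{c}}$ and $B_{I_{n}}$ the cross-hopping block, and iterate the resolvent identity $G = G^{\mathrm{ext}} - G^{\mathrm{ext}}B_{I_{n}}G$ to express $\langle\delta_{n}, G(z)\phi\rangle$ as a sum of chains, each of which must cross $I_{n}$ at least once. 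Every crossing is controlled by one of two mechanisms: (i) an internal factor $|G_{I_{n}}(z)(m,m')|\leq e^{-c_{2}|I_{n}|}$ valid on $|m-m'|>|I_{n}|/20$ by \eqref{GreenExpDecay}; or (ii) a long-range hop $|A_{p-p'}|\leq C_{1}e^{-c_{1}|I_{n}|/20}$ when the chain skips $I_{n}$ over a distance $\geq |I_{n}|/20$. Since $c_{2}\leq c_{1}$, each crossing yields a factor $\leq Ce^{-c_{2}|I_{n}|/20}$; the remaining "ambient" resolvents satisfy $\|G(z)\|, \|G_{I_{n}^{c}}(z)\|\leq 1/\operatorname{Im}(z)=T$, and the boundary hopping sums are absolutely convergent via $\sum_{m}|A_{m}|<\infty$. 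This yields
\[
|\langle \delta_{n}, G(E+i/T)\phi\rangle|\leq C_{3}\, T^{2}\, e^{-c_{2}|I_{n}|/20}\quad\text{for}\ |E|\leq K,\ |n|>N.
\]
With $|I_{n}|\geq \Psi(|n|)\geq (\log|n|)^{C_{2}}$ and $C_{2}>1$, the factor $e^{-c_{2}|I_{n}|/20}$ is super-polynomial in $|n|$, so the tail $\sum_{|n|>N}|n|^{p}a(n,T)$ converges; moreover, with $\Psi(N)\geq \tfrac{80}{c_{2}}\log T$ one has $e^{-c_{2}|I_{n}|/20}\leq T^{-4}$, which absorbs the polynomial $T$-losses and bounds the tail by a $T$-independent constant. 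Combining with the small-$n$ estimate and absorbing $\|\phi\|$- and constant-dependence into $T_{0}$ gives the claim.

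\textbf{Main obstacle.}
The delicate step is the chain expansion for a long-range operator equipped with only a \emph{single} good box (rather than a nested sequence of them). Long-range hopping means a "crossing" of $I_{n}$ need not physically pass through the box: it can be a single direct hop of length $\geq |I_{n}|$, or an alternation of short hops with interior walks in $G_{I_{n}}$ that might land near each other. Carefully enumerating which chains realize the crossing through $G_{I_{n}}$ versus through long hops, and bookkeeping the combinatorial constants from the $\sum_{m}|A_{m}|$-summable hopping sums so that they do not inflate the $e^{-c_{2}|I_{n}|/20}$ decay, is the technical crux.
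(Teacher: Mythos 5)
Your proposal follows essentially the same route as the paper's proof: Parseval's identity, splitting the position sum at a scale $R$ with $\Psi(R)\asymp\frac{80}{c_2}\log T$, and the key input that a single good box lying between $\operatorname{supp}\phi$ and $n$ forces $|\langle\delta_n,G(E+i/T)\phi\rangle|\lesssim T^{O(1)}e^{-c_2|I|/20}$. Two remarks. First, the block-resolvent chain expansion you flag as the technical crux is precisely the step the paper does not redo: it quotes it as Lemma~2.3 of Jitomirskaya--Liu \cite{MR4288185}, which yields $|G(z)(j,N)|\lesssim_{\phi,C_1,c_1}T^{4}e^{-\frac{c_2}{20}|I|}$ for long-range operators with exponentially decaying hopping; your sketch (a crossing of $I$ occurs either through $G_I$, where \eqref{GreenExpDecay} applies for $|m-n|>|I|/20$, or through a long hop over a distance comparable to $|I|$, where $|A_m|\leqslant C_1e^{-c_1|m|}$ and $c_1\geqslant c_2$ apply) is the correct mechanism, so citing that lemma closes the gap you identify. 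Second, there is one step that would fail as written: you cannot restrict the energy integral to $|E|\leqslant K$ ``for free''. For $|E|>K$, the only information coming from $\sigma(H)\subseteq[-K+1,K-1]$ is $\|G(E+i/T)\|\leqslant (|E|-K+1)^{-1}$, which contributes $O(1/T)$ to $a(n,T)$ \emph{uniformly in $n$}; summed against $|n|^{p}$ over $|n|>R$ this diverges. The paper handles this region with the Combes--Thomas estimate, which gives exponential decay in $|n|$ for energies off the spectrum, and you need the same input. Finally, a cosmetic point: placing the cutoff exactly at $\Gamma(\frac{80}{c_2}\log T)$ leaves no room to absorb the $\phi$-, $K$-, $C_1$-dependent constants into the stated bound; the paper takes $\Psi(R)=\frac{78}{c_2}\log T$ and uses the remaining margin for this purpose.
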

\begin{proof}
    Let $z=E+\frac{i}{T}$ where $|E|\leqslant K$ and $T\geqslant \frac{1}{\epsilon_{0}}$. Recall the following lemma essentially proved by Jitomirskaya–Liu \cite{MR4288185}:
\begin{lemma}[{\cite[Lemma 2.3]{MR4288185}}]\label{ModifiedLemma2.1}
    Assume for some interval $I$ with $I\subseteq [-\frac{|N|}{2},-\frac{|N|}{4}]$ or $I\subseteq [\frac{|N|}{4}, \frac{|N|}{2}]$,  (\ref{GreenExpDecay}) holds. Then for any $j\in\operatorname{supp} \phi$, 
    \begin{equation*}
        |G(z)(j,N)|\lesssim_{\phi,C_1,c_1} T^{4}e^{-\frac{c_{2}}{20} |I|}.
    \end{equation*}
\end{lemma}

By Lemma \ref{ModifiedLemma2.1} and \eqref{IntervalLargerThanF}, for any $j\in \operatorname{supp} \phi$ and $|n|\geqslant N_{0}$,
\begin{equation}\label{Gz}
    |G(z)(j,n)| \lesssim_{\phi,C_{1},c_{1}} T^{4} e^{-c_{3}\Psi(|n|)},
\end{equation}
where $c_{3}=\frac{c_{2}}{20}$.
Denote
\begin{equation*}
    a(j,n,T)=\frac{2}{T}\int_0^{\infty} e^{-2t/T}|\langle \delta_n, e^{-i t H}\delta_j\rangle|^2 d t.
\end{equation*}
According to Parseval's identity (see \cite{damanik2022one}), one has 
\begin{equation*}
    a(j,n,T)= \frac{1}{\pi T} \int_{\mathbb{R}} |G(z)(j,n)|^{2} dE.
\end{equation*}
Since $\sum_{n\in\mathbb{Z}} |\langle \delta_n, e^{-i t H}\delta_j \rangle|^2=1$, one has
\begin{equation}\label{unit}
    \sum_{n\in\mathbb{Z}} a(j,n,T)=1\quad \text{for any}\ j\in \operatorname{supp} \phi.
\end{equation}
By \eqref{unit}, for any $R\geqslant 1$,
\begin{equation}\label{Xp}
    \begin{split}
        \langle |X_{H}|_{\phi}^p\rangle (T) & \lesssim_{\phi} \bigg( \sum_{|n|\leqslant R}+\sum_{|n|\geqslant R} \bigg) \sum_{j\in \operatorname{supp} \phi} |n|^{p}a(j,n,T)\\
        &\lesssim_{\phi} R^{p}+ \sum_{j\in \operatorname{supp} \phi}\sum_{|n|\geqslant R}|n|^{p}a(j,n,T).
    \end{split}
\end{equation}
By the Combes–Thomas estimate (see \cite{MR1301371}), for sufficiently large $|n|$, 
\begin{equation}\label{CT}
    \begin{split}
        a(j,n,T)&\leqslant  \frac{1}{\pi T} \bigg( \int_{|E|\geqslant K} + \int_{|E|\leqslant K} \bigg)|G(z)(j,n)|^{2} dE\\
        &\lesssim_{K} \frac{1}{T}e^{-c_{4}|n|} + \frac{1}{T}\int_{|E|\leqslant K}  |G(z)(j,n)|^{2} dE,
    \end{split}
\end{equation}
where $c_{4}>0$ is a universal constant. By \eqref{Gz}, we have
\begin{equation}\label{GT}
    \begin{split}
        \frac{1}{T}\int_{|E|\leqslant K}  |G(z)(j,n)|^{2} dE& \lesssim_{\phi,C_{1},c_{1},K} T^{7} e^{-2c_{3}\Psi(|n|)}.
    \end{split}
\end{equation}
Combining \eqref{Xp}, \eqref{CT} and \eqref{GT}, we get
\begin{equation*}
    \begin{split}
        \langle |X_{H}|_{\phi}^p\rangle (T) &\lesssim_{\phi,K,C_{1},c_{1}} R^{p}+ \frac{1}{T}\sum_{|n|\geqslant R}|n|^{p} e^{-c_{4}|n|}+T^{7}\sum_{|n|\geqslant R} |n|^{p}e^{-2c_{3}\Psi(|n|)}\\
        &\lesssim_{\phi,K,C_{1},c_{1}} R^{p}+T^{7}\sum_{|n|\geqslant R} |n|^{p}e^{-2c_{3}\Psi(|n|)}.
    \end{split}
\end{equation*}
Choose $R$ such  that $\Psi(R)=\frac{3.9}{c_{3}} \log T$, that is $R=\Gamma(\frac{3.9}{c_{3}} \log T)$. Hence for sufficiently large $T\geqslant T_{0}(\phi,K,C_{1},C_{2},c_{1},c_{3},\epsilon_{0},p)$,
\begin{equation*}
    \langle |X_{H}|_{\phi}^p\rangle (T) \leqslant  \bigg[ \Gamma\bigg(\frac{4}{c_{3}}\log T\bigg)\bigg]^{p}.
\end{equation*}
This finishes the proof. 
\end{proof}

In particular, we are interested in the following $\Psi(\cdot)$.
\begin{corollary}\label{ca1}
    Under the assumptions of Theorem \ref{General Transport Bound}, if $\Psi(\cdot)$ takes the form
    \begin{equation*}
        \Psi(N) = N^{\delta}
    \end{equation*}
    for some $\delta>0$, then for any $\varepsilon>0$, there exists $T_1=T_1(T_0,\delta,\varepsilon)>0$ such that for all $T\geqslant T_{1}$,
    \begin{equation*}
        \langle |X_{H}|^{p}_{\phi}\rangle (T) \leqslant (\log T)^{\frac{p}{\delta}+\varepsilon}.
    \end{equation*}
\end{corollary}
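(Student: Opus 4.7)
The plan is to invoke Theorem~\ref{General Transport Bound} directly and then perform the elementary bookkeeping on the constants. Since the assumptions of Theorem~\ref{General Transport Bound} are in force and $\Psi(N)=N^\delta$ satisfies the monotonicity requirement as well as the lower bound $\Psi(N)\geqslant (\log N)^{C_2}$ for $N$ large enough, the theorem yields
\begin{equation*}
   \langle |X_H|^p_\phi\rangle(T)\leqslant \Bigl[\Gamma\Bigl(\tfrac{80}{c_2}\log T\Bigr)\Bigr]^p
\end{equation*}
for all $T\geqslant T_0$, where $\Gamma$ is the inverse of $\Psi$. For the specific choice $\Psi(N)=N^\delta$ the inverse is explicit, namely $\Gamma(y)=y^{1/\delta}$, so we immediately obtain
\begin{equation*}
   \langle |X_H|^p_\phi\rangle(T)\leqslant \Bigl(\tfrac{80}{c_2}\Bigr)^{p/\delta}(\log T)^{p/\delta}.
\end{equation*}

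Next I would absorb the multiplicative constant into a $(\log T)^{\varepsilon}$ factor. For any $\varepsilon>0$, the inequality
\begin{equation*}
   \Bigl(\tfrac{80}{c_2}\Bigr)^{p/\delta}\leqslant (\log T)^{\varepsilon}
\end{equation*}
holds as soon as $\log T\geqslant (80/c_2)^{p/(\delta\varepsilon)}$, i.e.\ for all $T\geqslant T_1$ where $T_1=T_1(T_0,\delta,\varepsilon,p,c_2)$ is chosen to simultaneously exceed $T_0$ and this threshold. Combining the two displays produces the claimed bound $\langle |X_H|^p_\phi\rangle(T)\leqslant (\log T)^{p/\delta+\varepsilon}$.

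There is no genuine obstacle here; the corollary is essentially a restatement of Theorem~\ref{General Transport Bound} for a polynomial choice of $\Psi$. The only point worth being careful about is that the constant $80/c_2$ depends on the operator but not on $T$, so its contribution is subpolynomial in $\log T$ and can always be swept into the arbitrarily small exponent $\varepsilon$ by enlarging $T_1$.
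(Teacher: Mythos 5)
Your proof is correct and is exactly the computation the paper intends (the paper gives no separate proof of this corollary): invert $\Psi(N)=N^\delta$ to get $\Gamma(y)=y^{1/\delta}$, apply Theorem~\ref{General Transport Bound}, and absorb the constant $(80/c_2)^{p/\delta}$ into $(\log T)^{\varepsilon}$ by enlarging $T_1$. The only cosmetic remark is that your threshold for $T_1$ also depends on $p$ and $c_2$, which is consistent with the statement since $T_0$ already carries that dependence.
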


\begin{corollary}\label{ca2}
    Under the assumptions of Theorem \ref{General Transport Bound}, if $\Psi(\cdot)$ takes the form
    \begin{equation*}
        \Psi(N)= \exp \bigg( \delta(\log N)^{\sigma} \bigg)
    \end{equation*}
    for some $\delta,\sigma>0$, then for any $\varepsilon>0$, there exists $T_{2}=T_{2}(T_{0},\delta,\sigma,\varepsilon)>0$ such that for all $T\geqslant T_{2}$,
    \begin{equation*}
        \langle |X_{H}|^{p}_{\phi}\rangle (T) \leqslant \exp \bigg[p \bigg(\frac{1+\varepsilon}{\delta}\log\log T\bigg)^{1/\sigma} \bigg].
    \end{equation*}
\end{corollary}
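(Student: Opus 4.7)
The plan is to apply Theorem~\ref{General Transport Bound} directly, and then carefully invert the specific $\Psi$. By Theorem~\ref{General Transport Bound}, for all $T\geqslant T_{0}$ we already have
\[
\langle |X_{H}|_{\phi}^{p}\rangle(T) \leqslant \bigl[\Gamma\bigl(\tfrac{80}{c_{2}}\log T\bigr)\bigr]^{p},
\]
so the only real work is to compute the inverse $\Gamma$ of the given function $\Psi(N)=\exp(\delta(\log N)^{\sigma})$ and then absorb the resulting constants into the $\varepsilon$ slack in the statement.

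First I would invert $\Psi$. Since $\Psi$ is strictly increasing on $(1,\infty)$, solving $y=\exp(\delta(\log N)^{\sigma})$ for $N$ yields $\log N=(\tfrac{\log y}{\delta})^{1/\sigma}$, hence
\[
\Gamma(y)=\exp\!\left(\left(\tfrac{\log y}{\delta}\right)^{\!1/\sigma}\right).
\]
Plugging in $y=\tfrac{80}{c_{2}}\log T$, we get
\[
\Gamma\!\left(\tfrac{80}{c_{2}}\log T\right) = \exp\!\left(\left(\tfrac{1}{\delta}\bigl(\log\log T+\log\tfrac{80}{c_{2}}\bigr)\right)^{\!1/\sigma}\right).
\]

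Next I would handle the additive constant $\log(80/c_{2})$. Given $\varepsilon>0$, choose $T_{2}\geqslant T_{0}$ large enough so that for all $T\geqslant T_{2}$,
\[
\log\log T + \log\tfrac{80}{c_{2}} \leqslant (1+\varepsilon)\log\log T.
\]
Substituting this bound and raising to the $p$-th power gives exactly the claimed estimate
\[
\langle |X_{H}|_{\phi}^{p}\rangle(T) \leqslant \exp\!\left[p\left(\tfrac{1+\varepsilon}{\delta}\log\log T\right)^{\!1/\sigma}\right].
\]

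There is no real obstacle here; the argument is a direct inversion combined with a standard absorption of a bounded constant into an $\varepsilon$-factor in front of the iterated logarithm. The only point that requires mild care is monotonicity of $\Psi$ on the relevant range and the requirement $\Psi(N)\geqslant (\log N)^{C_{2}}$ coming from \eqref{IntervalLargerThanF}; but for the form $\Psi(N)=\exp(\delta(\log N)^{\sigma})$ this lower bound is automatic once $N$ is sufficiently large, which can be arranged by further enlarging $T_{2}$.
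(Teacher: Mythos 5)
Your proof is correct and is exactly the intended argument: the paper states Corollary~\ref{ca2} without proof, and the direct inversion of $\Psi$, substitution of $y=\tfrac{80}{c_2}\log T$, and absorption of the additive constant $\log(80/c_2)$ into the $(1+\varepsilon)$ factor for $T$ large is precisely what the authors rely on. No gaps.
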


\begin{corollary}\label{ca3}
    Under the assumptions of Theorem \ref{General Transport Bound}, if $\Psi(\cdot)$ takes the form
    \begin{equation*}
        \Psi(N)= (\log N)^{1/\delta}
    \end{equation*}
    for some $0<\delta<1$, then for any $\varepsilon>0$, there exists $T_{3}=T_{3}(T_{0},\delta,\varepsilon)>0$ such that for all $T\geqslant T_{3}$,
    \begin{equation*}
        \langle |X_{H}|^{p}_{\phi}\rangle (T) \leqslant \exp \bigg(p (\log T)^{\delta+\varepsilon}\bigg).
    \end{equation*}
\end{corollary}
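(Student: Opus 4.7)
The plan is to apply Theorem \ref{General Transport Bound} directly, so the task reduces to inverting $\Psi$ and simplifying the resulting bound. First I would verify that $\Psi(N)=(\log N)^{1/\delta}$ satisfies the hypothesis $\Psi(N)\geqslant (\log N)^{C_2}$ with some $C_2>1$ required by Theorem \ref{General Transport Bound}: since $0<\delta<1$, the exponent $1/\delta>1$, so one may simply take $C_2=1/\delta$, and $\Psi$ is manifestly monotone increasing.

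Next I would compute the inverse $\Gamma$. Solving $y=(\log N)^{1/\delta}$ for $N$ gives $\log N=y^{\delta}$, hence
\begin{equation*}
    \Gamma(y)=\exp(y^{\delta}).
\end{equation*}
Plugging this into the conclusion of Theorem \ref{General Transport Bound} yields, for $T$ sufficiently large,
\begin{equation*}
    \langle |X_H|^p_\phi\rangle(T)\leqslant \Bigl[\Gamma\Bigl(\tfrac{80}{c_2}\log T\Bigr)\Bigr]^p
    =\exp\!\Bigl(p\bigl(\tfrac{80}{c_2}\bigr)^{\delta}(\log T)^{\delta}\Bigr).
\end{equation*}

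Finally, to match the form in the statement, I would absorb the constant $(80/c_2)^{\delta}$ into an extra factor $(\log T)^{\varepsilon}$. Concretely, since $(80/c_2)^{\delta}$ is independent of $T$, for $T\geqslant T_3$ with $T_3$ chosen large enough that $\log T\geqslant (80/c_2)^{\delta/\varepsilon}$, we have $(80/c_2)^{\delta}\leqslant (\log T)^{\varepsilon}$, and therefore
\begin{equation*}
    \exp\!\Bigl(p\bigl(\tfrac{80}{c_2}\bigr)^{\delta}(\log T)^{\delta}\Bigr)\leqslant \exp\!\bigl(p(\log T)^{\delta+\varepsilon}\bigr),
\end{equation*}
which is the desired estimate. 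There is no substantive obstacle here: the whole content of the corollary is the algebraic identity $\Gamma(y)=\exp(y^{\delta})$ together with the trivial absorption of a constant prefactor into a tiny power of $\log T$, so the only thing to be careful about is choosing $T_3$ large enough (depending on $T_0$ from Theorem \ref{General Transport Bound}, on $\delta$, $c_2$, and $\varepsilon$) to justify both the application of the theorem and the constant absorption step.
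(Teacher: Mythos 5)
Your proof is correct and is essentially the argument the paper intends (the paper states the corollary without proof): invert $\Psi$ to get $\Gamma(y)=\exp(y^{\delta})$, apply Theorem~\ref{General Transport Bound}, and absorb the constant $(80/c_2)^{\delta}$ into the $(\log T)^{\varepsilon}$ factor for $T\geqslant T_3$ large. Your check that $C_2=1/\delta>1$ is the right verification of the hypothesis, and the stated dependence of $T_3$ on $c_2$ is harmless since $T_0$ already carries that dependence.
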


% Indeed, once a few discrepancy results are proven, then the the assumptions of Theorem \ref{General Transport Bound} are satisfied, and we indeed get Theorems \ref{qdDC} - \ref{qdLiou}. This is proved explicitly in Section \ref{Proof of main Results-Section}.

\section{Discrepancy estimates}\label{DiscrepancyEstimates}

We first prove a few fundamental results regarding discrepancy estimates for the Kronecker sequence $\{n\alpha\}$ on $\mathbb{T}^{d}$.

\begin{theorem}\label{dks}
    Let $\Phi:  \mathbb{R}^{+} \rightarrow \mathbb{R}^{+}$ be such that $\Phi(t)/t$ is monotone increasing on $[\rho,\infty)$ for some $\rho\geqslant 2$. Let $\mu:=\max_{t\in[1,\rho]} \Phi(t)<\infty$.   Let $\alpha\in\mathbb{R}^{d}$ satisfy
    \begin{equation}\label{Gen_Diophantine_Cond}
        \|\langle n,\alpha\rangle\|_{\mathbb{T}} > \frac{1}{\Phi(|n|)} \quad \text{for any}\ n\in\mathbb{Z}^{d}\setminus\{0\}.
    \end{equation}
    Then for any $M\geqslant \rho$,
    \begin{equation*}
    D_N(\alpha) \lesssim_{\rho,\mu,d} \frac{1}{M} + \frac{1}{N}+\frac{\Phi(M)\log(\Phi(M))(\log M)^d}{MN}.
\end{equation*}
\end{theorem}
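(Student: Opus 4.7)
The plan is to apply the Erd\H{o}s--Tur\'{a}n--Koksma inequality (Theorem~\ref{ETI}) to the Kronecker sequence $x_n = n\alpha$; by Definition~\ref{defdis} the discrepancy does not depend on the phase, so we may assume $\theta = 0$ without loss of generality. With parameter $M$, Theorem~\ref{ETI} gives
\begin{equation*}
    D_N(\alpha) \lesssim_d \frac{1}{M} + \sum_{0 < |m| < M} \frac{1}{r(m)} \left|\frac{1}{N} \sum_{n=1}^N e^{2\pi i n \langle m, \alpha\rangle}\right|,
\end{equation*}
and the standard geometric sum estimate bounds the inner average by $\min\bigl(1, (2N\|\langle m, \alpha\rangle\|_{\mathbb{T}})^{-1}\bigr)$. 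Invoking \eqref{Gen_Diophantine_Cond} to replace $\|\langle m,\alpha\rangle\|_{\mathbb{T}}^{-1}$ by $\Phi(|m|)$ then reduces the problem to estimating
\begin{equation*}
    S := \sum_{0 < |m| < M} \frac{1}{r(m)} \min\left(1, \frac{\Phi(|m|)}{2N}\right).
\end{equation*}

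The key analytic input is the monotonicity of $\Phi(t)/t$ on $[\rho, \infty)$: for $|m| \in [\rho, M)$ it yields the pointwise inequality $\Phi(|m|) \leq |m|\Phi(M)/M$, while the exceptional range $0 < |m| < \rho$ is absorbed into an $O_{\rho,\mu,d}(1/N)$ term using the crude bound $\Phi(|m|) \leq \mu$. To harvest both the ``slope'' $\Phi(M)/M$ and the requisite logarithmic factors, I decompose $S$ dyadically in the size of $\|\langle m,\alpha\rangle\|_{\mathbb{T}}$ via the shells $\{m : \|\langle m,\alpha\rangle\|_{\mathbb{T}} \in [2^{-k-1}, 2^{-k}]\}$ for $k = 0, 1, \ldots, O(\log \Phi(M))$. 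On the $k$-th shell the exponential sum contributes at most $2^{k+1}/N$, while Diophantine plus monotonicity forces $|m| \gtrsim 2^{k}M/\Phi(M)$, restricting the relevant $m$ to an annular region whose $1/r(m)$-weighted count is controlled through the elementary bound $\sum_{0 < |m| < M} 1/r(m) \lesssim_d (\log M)^d$.

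The main obstacle is orchestrating this dyadic bookkeeping so that the eventual bound carries the factor $1/M$ extracted from monotonicity rather than the naive $\Phi(M)/N$ one reads off from the pointwise Diophantine bound. Concretely, the factor $\Phi(M)/M$ comes from coupling the lower bound $|m| \gtrsim 2^{k}M/\Phi(M)$ with the logarithmic $1/r(m)$ counting on each shell; the $\log \Phi(M)$ factor emerges from summing across the $O(\log \Phi(M))$ dyadic levels; and the $(\log M)^d$ factor reflects the per-shell weighted count. Finally, assembling the contributions of the exceptional range $|m| < \rho$, the leading $1/M$ term from Erd\H{o}s--Tur\'{a}n--Koksma, the $1/N$ baseline, and the dyadic total yields the claimed inequality.
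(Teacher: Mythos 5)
Your setup (Erd\H{o}s--Tur\'{a}n--Koksma, the $O_{\rho,\mu,d}(1/N)$ treatment of $|m|<\rho$, and the use of the monotonicity of $\Phi(t)/t$ to convert $\Phi(2^{r})$ into $2^{r}\Phi(M)/M$) matches the paper, but the central counting step is missing and, as written, your bookkeeping does not produce the factor $1/M$. On the $k$-th shell $S_k=\{m:\|\langle m,\alpha\rangle\|_{\mathbb{T}}\in[2^{-k-1},2^{-k}]\}$ you must estimate $\frac{2^{k}}{N}\sum_{m\in S_k}\frac{1}{r(m)}$. You propose to control $\sum_{m\in S_k}1/r(m)$ by the global bound $\sum_{0<|m|<M}1/r(m)\lesssim_d(\log M)^d$; but then the shell contribution is $2^{k}(\log M)^d/N$ and summing over $k\leqslant\log_2\Phi(M)$ gives $\Phi(M)(\log M)^d/N$ --- larger than the claimed bound by a factor of order $M$. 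You cannot simultaneously use the pointwise bound $1/r(m)\leqslant\Phi(M)/(2^kM)$ (coming from $|m|\gtrsim 2^kM/\Phi(M)$) \emph{and} the weighted count $(\log M)^d$ on the same sum; pulling out the pointwise bound leaves you with the \emph{unweighted} cardinality $\#S_k$, which you never estimate, and trivially it can be as large as $M^d$.

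What is needed, and what the paper supplies, is a genuine point-count: after further localizing $m$ to dyadic boxes $T_r=\{m:2^{r_i-1}\leqslant|m_i|\leqslant 2^{r_i}\}$, a pigeonhole/separation argument (Lemma~\ref{FixedPoints}) shows that for $\Delta=\Phi(2^{r_1})$ each window $\{m\in T_r:\|\langle m,\alpha\rangle\|_{\mathbb{T}}\in[l\Delta^{-1},(l+1)\Delta^{-1}]\}$ contains at most $2^{d+1}$ points --- the localization is essential because the argument compares $\|\langle m-m',\alpha\rangle\|_{\mathbb{T}}\leqslant\Delta^{-1}$ with $\Phi(|m-m'|)^{-1}>\Delta^{-1}$, which requires $|m-m'|<2^{r_1}$. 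This yields $\sum_{m\in T_r}\|\langle m,\alpha\rangle\|_{\mathbb{T}}^{-1}\lesssim_d\Delta\log\Delta$, and only then does dividing by $r(m)\approx 2^{\sum r_i}$ and using $2^{-r_1}\Phi(2^{r_1})\leqslant M^{-1}\Phi(M)$ produce $M^{-1}\Phi(M)\log(\Phi(M))(\log M)^d$. Your shell decomposition can be repaired along these lines for $d=1$ (the points of $S_k$ in a fixed octant are $\Phi^{-1}(2^k)$-separated, so one can count them annulus by annulus), but for $d\geqslant 2$ the weight $1/r(m)=\prod_i\max\{1,|m_i|\}^{-1}$ is not comparable to $1/|m|$, and the coordinate-wise dyadic decomposition of the paper becomes unavoidable. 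As it stands, the proposal has a genuine gap at the per-shell count.
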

\begin{proof}
% Without loss of generality, we assume $\rho \geqslant 2$.
 By Theorem \ref{ETI}, for any $M \in \mathbb{N}$ we have
\begin{equation*}
    D_{N}(\alpha) \lesssim_{d} \frac{1}{M}+ \frac{1}{N}\sum_{|m| = 1}^{M}\frac{1}{r(m)}\bigg|\sum_{k= 1}^{N} e^{2\pi i k \langle m,\alpha\rangle} \bigg|,
\end{equation*}
where $r(m)=\prod_{i=1}^{d}\max \{1,|m_{i}|\}$ and  $|m|=\max_{1\leqslant  i\leqslant  d} |m_{i}|$. Since 
\begin{equation*}
    \bigg|\sum_{k=1}^{N} e^{2\pi i k x}\bigg| \lesssim \min \{N, \|x\|_{\mathbb{T}}^{-1}\},
\end{equation*}
we have
\begin{equation*}
    \frac{1}{N}\sum_{|m| = 1}^{\rho}\frac{1}{r(m)}\bigg|\sum_{k= 1}^{N} e^{2\pi i k \langle m,\alpha\rangle} \bigg| \lesssim \frac{1}{N} \sum_{|m| = 1}^{\rho}\frac{1}{r(m) \cdot \|\langle m,\alpha\rangle\|_{\mathbb{T}}}\lesssim_{\rho,\mu} \frac{1}{N},
\end{equation*}
and thus
\begin{equation}\label{ET:simp}
    D_N (\alpha) \lesssim_{\rho,\mu,d} \frac{1}{M}+ \frac{1}{N}+\frac{1}{N}\sum_{|m|=\rho}^{M}\frac{1}{r(m) \cdot \|\langle m,\alpha\rangle\|_{\mathbb{T}}}=: \frac{1}{M}+\frac{1}{N}+\frac{S_{0}}{N}.
\end{equation}

Let $r=(r_{1},\cdots, r_{d}) \in \mathbb{N}^d$ such that 
\begin{equation*}
    \frac{\log \rho}{\log 2}\leqslant  r_i \leqslant  \frac{\log M}{\log 2} \quad \text{for each}\ 1\leqslant  i\leqslant  d.
\end{equation*}
Fix any $r$ and denote
\begin{equation*}
    T_{r}:=\{m\in\mathbb{Z}^d:   2^{r_i-1}\leqslant  |m_i| \leqslant  2^{r_i} \ \text{for} \ 1\leqslant  i\leqslant  d \}.
\end{equation*}
Thus
\begin{equation}\label{S_0:bound1}
    \begin{split}
        S_0 &=\sum_{r} \sum_{m\in T_{r}} \frac{1}{r(m) \cdot \|\langle m,\alpha\rangle\|_{\mathbb{T}}}\\
        &\leqslant  \sum_{r} 2^{-\sum_{i=1}^{d} (r_i - 1)}\sum_{m \in T_{r}}\frac{1}{\|\langle m,\alpha\rangle\|_{\mathbb{T}}}.
    \end{split}
\end{equation}

Without loss of generality, we assume $r$ satisfies $r_1  = |r|$. By (\ref{Gen_Diophantine_Cond}) and monotonicity of $\Phi(\cdot)$, for any $m \in T_{r}$ we have the estimate 
\begin{equation*}
    \|\langle m, \alpha\rangle\|_{\mathbb{T}}\geqslant \frac{1}{\Phi(|m|)}\geqslant \frac{1}{\Phi(2^{r_1})}.
\end{equation*}

Define $\Delta = \Phi(2^{r_1})$. We need the following result:

\begin{lemma}\label{FixedPoints}
For all $l\leqslant  \lfloor \Delta \rfloor$, there are at most $2^{d+1}$ points $m$ in $T_{r}$ satisfying the inequality  
\begin{equation*}
l \Delta^{-1} \leqslant  \|\langle m, \alpha\rangle\|_{\mathbb{T}} \leqslant  (l+1)\Delta^{-1}.
\end{equation*}
\end{lemma}

\begin{proof}
    Otherwise if there exist $(2^{d+1}+1)$ points, then by the pigeonhole principle, either there will be $(2^{d}+1)$ points satisfying
\begin{equation*}
\|\langle m,\alpha\rangle\|_{\mathbb{T}}=\{ \langle m,\alpha \rangle\},
\end{equation*}
or there will be $(2^{d}+1)$ points satisfying
\begin{equation*}
	\|\langle m,\alpha\rangle\|_{\mathbb{T}}=1-\{ \langle m,\alpha \rangle\}.
\end{equation*}
In either case, there exists a hyperoctant of $\mathbb{Z}^{d}$ such that at least two points $m,m'$ among $(2^{d}+1)$ points are located in this hyperoctant, which means
\begin{equation*}
	m_{j} m'_{j}\geqslant 0 \quad \text{for all} \ j=1,\cdots,d.
\end{equation*}
It would follow that 
\begin{equation*}
    0<|m-m'|<2^{r_{1}-1},
\end{equation*}
and 
\begin{equation*}
    \begin{split}
        \|\langle m-m',\alpha\rangle\|_{\mathbb{T}}&\leqslant  \{\langle m-m',\alpha\rangle\} \\
        &=|\{\langle m,\alpha\rangle\}-\{\langle m',\alpha\rangle\}|\\
        &=|\|\langle m,\alpha\rangle\|_{\mathbb{T}}-\|\langle m',\alpha\rangle\|_{\mathbb{T}}|\\
        &\leqslant  \Delta^{-1}.
    \end{split}
\end{equation*}
Using (\ref{Gen_Diophantine_Cond}), we arrive at
\begin{equation*}
\|\langle m-m',\alpha\rangle\|_{\mathbb{T}}\geqslant \frac{1}{\Phi(|m - m'| )}  >\frac{1}{\Phi(2^{r_1})}=\Delta^{-1},
\end{equation*}
which is a contradiction. 
\end{proof}

From Lemma \ref{FixedPoints}, we see that
\begin{equation*}
    \sum_{m\in T_{r}} \frac{1}{\|\langle m,\alpha\rangle\|_{\mathbb{T}}} \lesssim_{d}\Delta\sum_{l = 1}^{[\Delta]} \frac{1}{l} \lesssim_{d} \Delta \log(\Delta).
\end{equation*}
Substituting the above estimate into \eqref{S_0:bound1} then yields
\begin{equation*}
    S_{0}\lesssim_{d} \sum_{r_1=\log_{2}{\rho}}^{\log_{2} M}2^{-r_1}\Delta\log (\Delta) (\log M)^{d-1}.
\end{equation*}
By the assumption that $\Phi(t)/t$ is monotone increasing, we thus have
\begin{equation}\label{Me}
    S_0 \lesssim_{d} M^{-1}\Phi(M)\log(\Phi(M))(\log M)^d.
\end{equation}
Substituting \eqref{Me} into \eqref{ET:simp} gives
\begin{equation*}
    D_N(\alpha) \lesssim_{\rho,\mu,d} \frac{1}{M} + \frac{1}{N} +\frac{\Phi(M)\log(\Phi(M))(\log M)^d}{MN}.
\end{equation*}
This finishes the proof.
\end{proof}

In particular, we are interested in the following cases.
\begin{corollary}\label{cor_NlogN}
    Let $\eta>0, \gamma\geqslant 1$. Let $\alpha\in \mathbb{R}^{d}$ satisfy
    \begin{equation*}
        \|\langle n,\alpha\rangle\|_{\mathbb{T}} \geqslant \eta |n|^{-\gamma}\quad \text{for all}\ n\in\mathbb{Z}^{d}\setminus\{0\}.
    \end{equation*}
    Then for sufficiently large $N$,
    \begin{equation*}
        D_{N}(\alpha)\lesssim_{d,\gamma,\eta} N^{-1/\gamma} (\log N)^{d+1}.
    \end{equation*}
\end{corollary}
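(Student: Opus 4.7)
\smallskip

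The plan is to deduce the corollary as a direct specialization of Theorem~\ref{dks}. Set $\Phi(t) = \eta^{-1} t^{\gamma}$. Since $\gamma \geqslant 1$, the ratio $\Phi(t)/t = \eta^{-1} t^{\gamma-1}$ is monotone increasing on $[\rho,\infty)$ for any $\rho \geqslant 2$. Moreover, the hypothesis $\|\langle n,\alpha\rangle\|_{\mathbb{T}} \geqslant \eta |n|^{-\gamma}$ can be rewritten as $\|\langle n,\alpha\rangle\|_{\mathbb{T}} \geqslant 1/\Phi(|n|)$, which is condition \eqref{Gen_Diophantine_Cond}. With $\rho = 2$ and $\mu = \Phi(2) = 2^{\gamma}/\eta$, Theorem~\ref{dks} applies with all implicit constants depending only on $d, \gamma, \eta$.

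Next I would plug this $\Phi$ into the bound of Theorem~\ref{dks} to get, for any $M \geqslant 2$,
\begin{equation*}
    D_{N}(\alpha) \lesssim_{d,\gamma,\eta} \frac{1}{M} + \frac{1}{N} + \frac{\eta^{-1} M^{\gamma} \cdot \log(\eta^{-1} M^{\gamma}) \cdot (\log M)^{d}}{MN}.
\end{equation*}
Since $\log(\eta^{-1} M^{\gamma}) \lesssim_{\gamma,\eta} \log M$ for $M$ large, this simplifies to
\begin{equation*}
    D_{N}(\alpha) \lesssim_{d,\gamma,\eta} \frac{1}{M} + \frac{1}{N} + \frac{M^{\gamma - 1} (\log M)^{d+1}}{N}.
\end{equation*}

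Finally, I would optimize by choosing $M = \lfloor N^{1/\gamma} \rfloor$. This balances the first and third terms: $1/M \lesssim N^{-1/\gamma}$ and
\begin{equation*}
    \frac{M^{\gamma-1}(\log M)^{d+1}}{N} \lesssim \frac{N^{(\gamma-1)/\gamma} (\log N)^{d+1}}{N} = N^{-1/\gamma}(\log N)^{d+1},
\end{equation*}
while the middle term $1/N$ is absorbed into these since $\gamma \geqslant 1$. Putting the three contributions together gives $D_N(\alpha) \lesssim_{d,\gamma,\eta} N^{-1/\gamma} (\log N)^{d+1}$, which is the claimed bound.

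There is no real obstacle here: the proof is a routine application of Theorem~\ref{dks} followed by a standard balancing of parameters. The only minor care needed is to confirm that the logarithmic factor $\log \Phi(M)$ behaves like $\log M$ up to constants depending on $\gamma$ and $\eta$, and that the monotonicity hypothesis on $\Phi(t)/t$ is satisfied thanks to $\gamma \geqslant 1$.
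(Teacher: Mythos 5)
Your proposal is correct and follows essentially the same route as the paper: specialize Theorem~\ref{dks} to $\Phi(t)=\eta^{-1}t^{\gamma}$ and balance the terms by taking $M\asymp N^{1/\gamma}$ (the paper chooses $M=(\eta N)^{1/\gamma}$ via $\Phi(M)=N$, which differs from your $\lfloor N^{1/\gamma}\rfloor$ only by a constant factor absorbed into $\lesssim_{d,\gamma,\eta}$). No gaps.
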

\begin{proof}
    Let $\Phi(t)= \eta^{-1} t^{\gamma}$. Then $\Phi(t)/t$ is monotone increasing on $[2,\infty)$. Apply Theorem \ref{dks} with $\Phi(M)=N$, that is,
    \begin{equation*}
        M=(\eta N)^{1/\gamma},
    \end{equation*}
    we have
    \begin{equation*}
        \begin{split}
            D_{N}(\alpha)&\lesssim_{d,\gamma,\eta}\frac{1}{M}+\frac{1}{N}+\frac{\log N (\log M)^{d}}{M}\\
            &\lesssim_{d,\gamma,\eta} N^{-1/\gamma} (\log N)^{d+1}.
        \end{split}
    \end{equation*}
\end{proof}

\begin{corollary}\label{cor_exp(log(n))}
    Let $\eta>0, \kappa>0,\gamma > 1$. Let $\alpha \in \mathbb{R}^d$ satisfy
    \begin{equation*}
        \|\langle n,\alpha\rangle\|_{\mathbb{T}} \geqslant \eta e^{-\kappa (\log |n|)^{\gamma}} \quad \text{for all}\ n\in\mathbb{Z}^{d}\setminus\{0\}.
    \end{equation*}
    Then for sufficiently large $N$,
    \begin{equation*}
        D_{N}(\alpha)\lesssim_{d,\kappa,\gamma,\eta} (\log N)^{1+d/\gamma} \exp \bigg[- \bigg(\frac{1}{\kappa}\log N \bigg)^{1/\gamma}\bigg].
    \end{equation*}    
\end{corollary}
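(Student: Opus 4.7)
The plan is to reduce the corollary to a direct application of Theorem \ref{dks} with the natural choice $\Phi(t) = \eta^{-1} e^{\kappa(\log t)^\gamma}$, which is exactly the reciprocal of the given lower bound on $\|\langle n, \alpha\rangle\|_{\mathbb{T}}$. The hypothesis \eqref{Gen_Diophantine_Cond} then holds by construction.

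The first nontrivial step is to verify that $\Phi(t)/t$ is monotone increasing on $[\rho,\infty)$ for some $\rho \geqslant 2$. Writing $\Phi(t)/t = \eta^{-1}\exp(\kappa(\log t)^\gamma - \log t)$ and differentiating the exponent in $u = \log t$, I get $\kappa\gamma u^{\gamma-1} - 1$, which becomes positive once $u \geqslant (\kappa\gamma)^{-1/(\gamma-1)}$. This is where the assumption $\gamma > 1$ is essential (for $\gamma = 1$ the monotonicity would fail for small $\kappa$), so I would pick $\rho = \rho(\kappa,\gamma)$ accordingly and set $\mu = \max_{t\in[1,\rho]}\Phi(t)$, a constant depending only on $\kappa,\gamma,\eta$.

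Next I would invert $\Phi$ at $N$: solving $\Phi(M) = N$ yields
\[
    \log M = \bigl(\kappa^{-1}\log(\eta N)\bigr)^{1/\gamma}, \qquad M = \exp\bigl[(\kappa^{-1}\log(\eta N))^{1/\gamma}\bigr].
\]
For $N$ large enough we have $M \geqslant \rho$, so Theorem \ref{dks} gives
\[
    D_N(\alpha) \lesssim_{\rho,\mu,d}\; \frac{1}{M} + \frac{1}{N} + \frac{\Phi(M)\log(\Phi(M))(\log M)^d}{MN}.
\]
With $\Phi(M) = N$ the third term becomes $\tfrac{1}{M}\log N \cdot (\kappa^{-1}\log(\eta N))^{d/\gamma}$, and inserting the expression for $1/M$ produces the claimed bound $(\log N)^{1+d/\gamma}\exp[-(\kappa^{-1}\log N)^{1/\gamma}]$ up to constants absorbed into $\lesssim_{d,\kappa,\gamma,\eta}$.

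The remaining (routine) step is to check that the $1/M$ and $1/N$ contributions are dominated by this third term. The $1/M$ term loses only a polylogarithmic factor, so it is absorbed. For $1/N$ versus $\exp[-(\kappa^{-1}\log N)^{1/\gamma}]$, note that since $\gamma > 1$, $(\log N)^{1/\gamma} = o(\log N)$, so $\exp[-(\kappa^{-1}\log N)^{1/\gamma}]$ decays strictly slower than $1/N$, making $1/N$ negligible. There is no real obstacle here; the only subtle point is the monotonicity verification, which pinpoints precisely where the condition $\gamma > 1$ is used.
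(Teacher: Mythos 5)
Your proposal is correct and follows exactly the same route as the paper: the same choice $\Phi(t)=\eta^{-1}e^{\kappa(\log t)^{\gamma}}$, the same inversion $\Phi(M)=N$ giving $M=\exp[(\kappa^{-1}\log(\eta N))^{1/\gamma}]$, and the same application of Theorem~\ref{dks}. You merely supply details the paper leaves implicit (the explicit monotonicity computation showing where $\gamma>1$ enters, and the check that the $1/M$ and $1/N$ terms are absorbed), all of which are accurate.
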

\begin{proof}
    Let $\Phi(t)=\eta^{-1}e^{\kappa(\log t)^{\gamma}}$. Then there exists $\rho$ (depending on $\kappa,\gamma$) such that $\Phi(t)/t$ is monotone increasing on $[\rho,\infty)$. Apply Theorem \ref{dks} with $\Phi(M)=N$, that is, 
    \begin{equation*}
        M=  \exp \bigg[\bigg(\frac{1}{\kappa}\log (\eta N)\bigg)^{1/\gamma}\bigg],
    \end{equation*}
    we have
    \begin{equation*}
        \begin{split}
            D_{N}(\alpha)&\lesssim_{\kappa,\gamma,d,\eta} \frac{1}{M}+\frac{1}{N}+ \frac{\log N (\log M)^{d}}{M}\\
            &\lesssim_{d,\kappa,\gamma,\eta} (\log N)^{1+d/\gamma}\exp \bigg[- \bigg(\frac{1}{\kappa}\log N\bigg)^{1/\gamma}\bigg].
        \end{split}
    \end{equation*}
\end{proof}

\begin{corollary}\label{cor_logN}
    Let $\eta>0,\kappa>0,0<\gamma<1$. Let $\alpha\in\mathbb{R}^{d}$ satisfy
    \begin{equation*}
        \|\langle n,\alpha\rangle\|_{\mathbb{T}}\geqslant \eta e^{-\kappa|n|^{\gamma}}\quad \text{for all}\ n\in\mathbb{Z}^{d}\setminus\{0\}.
    \end{equation*}
    Then for sufficiently large $N$,
    \begin{equation*}
        D_{N}(\alpha)\lesssim_{d,\kappa,\gamma,\eta} (\log N)^{-1/\gamma}.
    \end{equation*}
\end{corollary}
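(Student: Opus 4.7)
The plan is to apply Theorem~\ref{dks} with $\Phi(t)=\eta^{-1}e^{\kappa t^{\gamma}}$, paralleling the structure of Corollaries~\ref{cor_NlogN} and~\ref{cor_exp(log(n))}. First I would verify the monotonicity hypothesis on $\Phi(t)/t$: a direct logarithmic derivative gives $\tfrac{d}{dt}\log(\Phi(t)/t)=(\kappa\gamma t^{\gamma}-1)/t$, which is positive for $t\geqslant(\kappa\gamma)^{-1/\gamma}$. Hence $\Phi(t)/t$ is monotone increasing on $[\rho,\infty)$ for $\rho:=\max\{2,(\kappa\gamma)^{-1/\gamma}\}$, and the hypotheses of Theorem~\ref{dks} are satisfied with $\mu=\Phi(\rho)$ depending only on $\kappa,\gamma,\eta$.

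The delicate step is the choice of $M$. The naive analogue of the previous two corollaries would take $\Phi(M)=N$, i.e.\ $M\approx(\kappa^{-1}\log N)^{1/\gamma}$. While this already yields the desired order $1/M\sim(\log N)^{-1/\gamma}$ in the first term, the third error term $\Phi(M)\log\Phi(M)(\log M)^d/(MN)$ becomes $(\log N)(\log\log N)^d/M$, which overwhelms $1/M$ by an additional factor of $(\log N)(\log\log N)^d$. To absorb this, I would instead take $M$ to be the largest integer satisfying $\Phi(M)\leqslant N^{1/2}$, equivalently $M\approx((2\kappa)^{-1}\log N)^{1/\gamma}$. With this choice, the third term is bounded by $N^{-1/2}(\log N)(\log M)^d/M$, which is negligible compared to $1/M$ for $N$ large, and $1/M\lesssim_{\kappa,\gamma}(\log N)^{-1/\gamma}$.

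Plugging these back into Theorem~\ref{dks} yields $D_N(\alpha)\lesssim_{d,\kappa,\gamma,\eta} 1/M\lesssim_{\kappa,\gamma}(\log N)^{-1/\gamma}$, as required. I do not anticipate any substantive obstacle: the only subtlety is recognizing that because $\Phi$ grows sub-exponentially ($\gamma<1$), the polylogarithmic corrections $\log\Phi(M)$ and $(\log M)^{d}$ are no longer dwarfed by a power of $N$ as they were in Corollaries~\ref{cor_NlogN} and~\ref{cor_exp(log(n))}; so $M$ must be chosen a bit smaller than the natural cut-off $\Phi(M)=N$, and in fact any choice of the form $\Phi(M)\leqslant N^{1-\epsilon}$ with $\epsilon>0$ fixed would work equally well. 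Otherwise the proof is entirely parallel to the two preceding corollaries.
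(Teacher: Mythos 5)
Your proof is correct and follows essentially the same route as the paper: the paper likewise applies Theorem~\ref{dks} with $\Phi(t)=\eta^{-1}e^{\kappa t^{\gamma}}$ and chooses $M$ via $\Phi(M)=\sqrt{N}$, exactly the cut-off you arrive at after (correctly) observing that $\Phi(M)=N$ would let the third error term dominate. Your explicit verification of the monotonicity of $\Phi(t)/t$ and the remark that any $\Phi(M)\leqslant N^{1-\epsilon}$ works are accurate additions, but the argument is the same.
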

\begin{proof}
    Let $\Phi(t)=\eta^{-1} e^{\kappa t^{\gamma}}$. Then there exists $\rho$ (depending on $\kappa,\gamma$) such that $\Phi(t)/t$ is monotone increasing on $[\rho,\infty)$. Apply Theorem \ref{dks} with $\Phi(M)=\sqrt{N}$, that is,
    \begin{equation*}
        M=\bigg(\frac{1}{\kappa} \log (\eta\sqrt{N}) \bigg)^{1/\gamma},
    \end{equation*}
    we have
    \begin{equation*}
        \begin{split}
            D_{N}(\alpha)&\lesssim_{d,\kappa,\gamma,\eta}\frac{1}{M}+\frac{1}{N}+\frac{\log N (\log M)^{d}}{M\sqrt{N}}\\
            &\lesssim_{d,\kappa,\gamma,\eta} (\log N)^{-1/\gamma}.
        \end{split}
    \end{equation*}
\end{proof}

The following semi-algebraic discrepancy estimate is useful and will be applied repeatedly.
\begin{theorem}\label{semidks}
    Suppose $D_{N}(x_{n})\leqslant  Y_{N}\xrightarrow{N\rightarrow \infty} 0$. Let $\mathcal{S}\subseteq [0,1]^{d}$ be a semi-algebraic set with $\deg(\mathcal{S})\leqslant B$ and $\leb (\mathcal{S})\leqslant Y_{N}$. Then
    \begin{equation*}
        \#\{x_{n}\in \mathcal{S}\} \leqslant  2 B^{C(d)} N Y_{N}^{1/d}.
    \end{equation*}
\end{theorem}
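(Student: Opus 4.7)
The plan is to combine Lemma~\ref{covers} with Definition~\ref{defdis}: a semi-algebraic set of small Lebesgue measure admits an efficient cover by $\epsilon$-balls (hence by small axis-aligned cubes), and for each cube the discrepancy hypothesis converts a volume estimate into a counting estimate. Summing these counts over the cover gives the claim.

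First I would choose the scale $\epsilon := Y_N^{1/d}$, so that $\leb(\mathcal{S}) \leqslant Y_N = \epsilon^d$. Lemma~\ref{covers} then produces a cover of $\mathcal{S}$ by a collection $\{B_k\}$ of $\epsilon$-balls with $\#\{B_k\} \leqslant B^{C(d)} \epsilon^{1-d} = B^{C(d)} Y_N^{(1-d)/d}$. Enclose each $B_k$ in an axis-aligned cube $I_k$ of side length $2\epsilon$; then $I_k$ is a rectangle in the sense of Definition~\ref{defdis} and $\leb(I_k) \leqslant (2\epsilon)^d = 2^d Y_N$.

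Next I apply the discrepancy bound cube by cube. For each $I_k$, Definition~\ref{defdis} combined with the hypothesis $D_N(x_n) \leqslant Y_N$ yields
\begin{equation*}
    \#\{1 \leqslant n \leqslant N : x_n \in I_k\} \leqslant N \leb(I_k) + N D_N(x_n) \leqslant (2^d + 1) N Y_N.
\end{equation*}
Summing over the cover and using the fact that $\mathcal{S} \subseteq \bigcup_k I_k$, one obtains
\begin{equation*}
    \#\{1 \leqslant n \leqslant N : x_n \in \mathcal{S}\} \leqslant B^{C(d)} Y_N^{(1-d)/d} \cdot (2^d + 1) N Y_N = (2^d + 1) B^{C(d)} N Y_N^{1/d}.
\end{equation*}
The dimensional constant $2^d + 1$ can be absorbed by enlarging the exponent $C(d)$ (which depends only on $d$), so up to a harmless factor of at most $2$ this is the target inequality $2 B^{C(d)} N Y_N^{1/d}$.

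There is no real obstacle here; the argument is essentially a bookkeeping exercise that turns the semi-algebraic complexity bound (via Lemma~\ref{covers}) into a union of discrepancy-friendly rectangles. The only delicate point is ensuring that the number of cubes times the per-cube count yields exactly the exponent $Y_N^{1/d}$, which is why the choice $\epsilon = Y_N^{1/d}$ is forced: the cover contributes $\epsilon^{1-d}$ cubes, each of volume comparable to $\epsilon^d = Y_N$, so the two factors of $\epsilon$ combine to produce the claimed $Y_N^{1/d}$.
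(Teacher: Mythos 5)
Your proof is correct and follows essentially the same route as the paper: choose $\epsilon = Y_N^{1/d}$, cover $\mathcal{S}$ by $B^{C(d)}\epsilon^{1-d}$ balls via Lemma~\ref{covers}, bound the count in each piece by $N\leb(\cdot) + N D_N$, and sum. You are in fact slightly more careful than the paper in enclosing each ball in an axis-aligned cube (so that Definition~\ref{defdis} applies) and in tracking the dimensional constant $2^d+1$, which the paper absorbs without comment.
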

\begin{proof}
    Let $\epsilon = Y_{N}^{1/d}$. Then
    \begin{equation*}
        \leb(\mathcal{S}) < Y_{N} = \epsilon^{d}.
    \end{equation*}
    By Lemma~\ref{covers}, the set $\mathcal{S}$ can be covered by at most $B^{C(d)} \epsilon^{1-d}$ balls of radius $\epsilon$. Let $D$ be one such ball. Then by the definition of discrepancy,
    \begin{equation*}
        \#\{x_{n} \in D\} \leqslant N \leb(D) + N D_{N}(\alpha) \leqslant 2N Y_{N}.
    \end{equation*}
    Summing over all such balls, we obtain
    \begin{equation*}
        \#\{x_{n} \in \mathcal{S}\} \leqslant 2 B^{C(d)} N Y_{N}^{1/d}.
    \end{equation*}
    This finishes the proof.
\end{proof}
Theorem \ref{semidks} was proved for $Y_{N}=N^{-\varsigma}$ with $\varsigma>0$ in \cite{MR4546503}.

\section{Large deviation theorem for Liouville frequencies}\label{LargeDeviationTheoremForLiouvilleFrequency}
Since we are interested in quantum dynamics with Liouville frequencies, we will need the following large deviation theorem for transfer matrices, which holds in the Liouville setting.

% Recall that the positivity of the Lyapunov Exponent $L(E)$ is guranteed by Lemma \ref{ple}.
\begin{theorem}[\cite{MR4373222}]\label{HZ}
    Let $\alpha\in \mathbb{R}\setminus \mathbb{Q}$ and $V\in C^{\omega}_{h}(\mathbb{T},\mathbb{R})$. There exist constants $\tilde{c}_{1}(V,h), \tilde{c}_{2}(V,h)\in (0,1)$ such that, if \footnote{Recall that $\beta(\alpha)=\limsup_{|k|\rightarrow \infty} -\frac{\log \|k\alpha\|_{\mathbb{T}}}{|k|}$.}
    \begin{equation*}
        0\leqslant  \beta(\alpha)< \tilde{c}_{1}\inf_{E\in [a,b]} L(E),
    \end{equation*}
    then there exists $N_{1}=N_{1}(\alpha, \inf_{E\in [a,b]} L(E), V, h)>0$ such that for any $N\geqslant N_{1}$, the following large deviation estimates hold uniformly in $E\in [a,b]$:
    \begin{enumerate}
        \item If $0<L(E)<1$, then
        \begin{equation*}
            \leb \bigg\{\theta: \bigg|\frac{1}{N}\log \|M_{N}(\theta)\|- L_{N}(E)\bigg|> \frac{1}{100} L(E)\bigg\}< e^{-\tilde{c}_{2} L(E)N}.
        \end{equation*}

        \item If $L(E)\geqslant 1$, then
        \begin{equation*}
            \leb \bigg\{\theta: \bigg|\frac{1}{N}\log \|M_{N}(\theta)\|- L_{N}(E)\bigg|> \frac{1}{100} L(E)\bigg\}< e^{-\tilde{c}_{2} L(E)^{2}N}.
        \end{equation*}
    \end{enumerate}
\end{theorem}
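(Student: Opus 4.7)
The plan is to follow the classical Bourgain--Goldstein subharmonic strategy, but refined to the Liouville setting along the lines used for the almost Mathieu operator by Avila--Jitomirskaya and its analytic generalizations by Han--Zhang. The core object is $u_{N}(\theta):=\tfrac{1}{N}\log\|M_{N}(\theta)\|$, which is pluri-subharmonic on the strip $\{|\Im\theta|<h\}$ and bounded uniformly in $N$ (with bound controlled by $\|V\|_{h}$). By the Riesz representation, $u_{N}$ differs from its harmonic extension by a nonpositive potential of a measure, and this implies a BMO-type estimate $\|u_{N}-L_{N}(E)\|_{\mathrm{BMO}(\mathbb{T})}\lesssim_{V,h} 1/N$. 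From a John--Nirenberg type bound one gets $\mathrm{Leb}\{|u_{N}-L_{N}|>\kappa\}\lesssim e^{-c N\kappa}$, which is the weak form of the large deviation estimate. The issue is that $\kappa$ here must be essentially a universal constant, not $\tfrac{1}{100}L(E)$.

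The second step is to upgrade the weak BMO bound to an estimate where the deviation threshold depends linearly on $L(E)$. This is done by the Avalanche Principle of Goldstein--Schlag: comparing scales $N$ and $2N$ gives
\begin{equation*}
    |u_{2N}(\theta)+u_{2N}(\theta+N\alpha)-2u_{N}(\theta+\tfrac{N}{2}\alpha)|\lesssim e^{-cN L(E)}
\end{equation*}
away from a small bad set, provided the transfer matrices at scale $N$ are expanding of size $\gtrsim e^{cNL(E)}$. Iterating the avalanche bound along a dyadic sequence $N_{k}\sim 2^{k}N_{0}$ allows one to bootstrap the weak deviation estimate into the strong form $|u_{N}-L_{N}|\leqslant \tfrac{1}{100}L(E)$ off an exponentially small set. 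In the Diophantine case, one just needs $\|k\alpha\|_{\mathbb{T}}$ to stay sufficiently far from $0$ on the relevant scales; the step from $N$ to $2N$ uses a shift by $N\alpha$ that must avoid being too close to $0\bmod \mathbb{Z}$.

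The third and hardest step is the adaptation to Liouville $\alpha$. Here one uses the continued fraction expansion: fix $p_{n}/q_{n}\to\alpha$ and work on scales $q_{n}\leqslant N< q_{n+1}$. The Liouville exponent $\beta(\alpha)$ measures the worst possible rate $q_{n+1}\leqslant e^{\beta q_{n}(1+o(1))}$. The key identity is that $u_{N}(\theta+k\alpha)\approx u_{N}(\theta+k p_{n}/q_{n})$ for $k\leqslant q_{n}$, since $\|q_{n}\alpha\|_{\mathbb{T}}\leqslant 1/q_{n+1}$ is tiny; one then averages $u_{N}$ over the periodic orbit $\{\theta+j/q_{n}\}_{j=0}^{q_{n}-1}$ to smooth out resonances. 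The crucial input is that the loss $e^{\beta q_{n}}$ incurred when passing from the rational approximant back to $\alpha$ must be beaten by the gain $e^{-c q_{n} L(E)}$ coming from the avalanche step; this is precisely where the hypothesis $\beta(\alpha)<\tilde{c}_{1}\inf_{E\in[a,b]}L(E)$ enters, with $\tilde{c}_{1}$ determined by the implicit constants in steps one and two. Summing the geometric series of losses across dyadic scales yields the two regimes $L(E)<1$ and $L(E)\geqslant 1$, with respective exponential decay rates $e^{-\tilde{c}_{2}L(E)N}$ and $e^{-\tilde{c}_{2}L(E)^{2}N}$; the quadratic rate in the second regime comes from the fact that the avalanche bound produces a factor $e^{-cNL}$ per step while the number of usable scales is itself proportional to $L$.

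The main obstacle is the third step: Liouville resonances destroy the one-scale deviation bound at infinitely many scales, and one must carefully choose the working scales to be synchronized with the denominators $q_{n}$ so that the residual shifts $\|N\alpha\|_{\mathbb{T}}$ used in the Avalanche Principle are comparable to $1/q_{n}$ rather than $1/q_{n+1}$. Tracking the constants through this synchronization to produce an explicit $\tilde{c}_{1}$ depending only on $V$ and $h$ (and not on $\alpha$) is delicate, and making $N_{1}$ depend on $\alpha$ only through its continued fraction tail is the technical heart of the argument in \cite{MR4373222}.
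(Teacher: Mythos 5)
This statement is quoted verbatim from \cite{MR4373222}; the paper offers no proof of it, so there is no internal argument to compare against --- the only ``proof'' in the paper is the citation. Judged on its own terms, your outline reproduces the general Bourgain--Goldstein/Goldstein--Schlag framework, but it contains a genuine gap in step one. The Riesz representation and the uniform bound on the Riesz mass give $\|u_N\|_{\mathrm{BMO}}\lesssim_{V,h}1$ uniformly in $N$, \emph{not} $\lesssim 1/N$; John--Nirenberg applied to that bound yields only $\leb\{|u_N-L_N|>\kappa\}\lesssim e^{-c\kappa}$, with no factor of $N$ in the exponent. The factor of $N$ (and hence the entire content of the large deviation estimate) comes from combining the almost-invariance $|u_N(\theta)-u_N(\theta+\alpha)|\leqslant C/N$ with an ergodic average over the orbit and the arithmetic condition on $\alpha$ --- precisely the ingredients you defer to step three. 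As written, step one asserts the conclusion before the mechanism that produces it has been introduced, and step two (the Avalanche Principle bootstrap) then rests on an initial-scale estimate that has not actually been established; the Avalanche Principle moreover presupposes lower bounds $\|M_N(\theta)\|\gtrsim e^{cNL(E)}$ off a small set, which is itself a consequence of the LDT one is trying to prove, so the induction must be set up with care to avoid circularity.

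A secondary point: the paper remarks (after Corollary \ref{HZcoro}) that the proof of Theorem \ref{HZ} ``only involves the subharmonicity of the Lyapunov exponent,'' which indicates that the argument in \cite{MR4373222} is the Fourier-analytic one --- decay $|\widehat{u}_N(k)|\lesssim 1/|k|$ of Fourier coefficients of subharmonic functions, almost-invariance along the orbit with the averaging length synchronized to continued-fraction denominators $q_n$, and the hypothesis $\beta(\alpha)<\tilde{c}_1\inf L$ to control $\|k\alpha\|_{\mathbb{T}}^{-1}$ in the relevant range of frequencies --- rather than an Avalanche Principle iteration. Your step three correctly identifies where the condition on $\beta(\alpha)$ enters, but your explanation of the dichotomy between the rates $e^{-\tilde{c}_2 L N}$ and $e^{-\tilde{c}_2 L^2 N}$ (``number of usable scales proportional to $L$'') is speculative and does not follow from anything you set up earlier.
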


\begin{remark}
    In fact, $\frac{1}{100}$ in Theorem \ref{HZ} can be replaced by any $0<\kappa<1$. See Remark 1.2 in \cite{MR4373222}.
\end{remark}

The conclusion of Theorem~\ref{HZ} remains valid for complex energies $E + i\epsilon$, where $E \in [a,b]$ and $|\epsilon| \leqslant \epsilon_{0}$ for some sufficiently small $\epsilon_{0} > 0$ because the proof of Theorem~\ref{HZ}  only involves the subharmonicity of the Lyapunov exponent. More precisely, we have the following:

\begin{corollary}\label{HZcoro}
    Under the same assumptions as in Theorem \ref{HZ}, there exist  $N_{1}=N_{1}(\alpha,V,a,b)>0$ and $\epsilon_{0}=\epsilon_{0}(\alpha,V,a,b)>0$ such that for any $N\geqslant N_{1}$, the following holds uniformly for $z=E+i\epsilon$ where $E\in[a,b]$ and $|\epsilon|\leqslant \epsilon_{0}$:
    \begin{equation*}
        \leb \bigg\{\theta: \bigg|\frac{1}{N}\log \|M_{N}(\theta)\|- L_{N}(z)\bigg|> \frac{1}{100} L(z)\bigg\}< e^{-\tilde{c}_{2} L(z)N}.
    \end{equation*}
\end{corollary}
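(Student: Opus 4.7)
The plan is to trace through the proof of Theorem \ref{HZ} in \cite{MR4373222} and check that every step is stable under a small imaginary perturbation of the spectral parameter. The engine of that proof is a subharmonic-function analysis of the quantity
\begin{equation*}
    u_N(\theta; z) := \frac{1}{N}\log \|M_N(\theta; z)\|,
\end{equation*}
which for each fixed $z \in \mathbb{C}$ is subharmonic in $\theta$ on the strip $\{|\Im \theta| < h\}$. The crucial point is that this subharmonicity comes only from the holomorphicity of each factor $S_z^V(\theta + k\alpha)$ in $\theta$, and is therefore preserved verbatim when $E$ is replaced by $z = E + i\epsilon$.

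The first step is to fix a complex neighbourhood $K := [a,b] + i[-\epsilon_0, \epsilon_0]$ of $[a,b]$. By continuity of the Lyapunov exponent (which itself is subharmonic in $z$) and of $\beta(\alpha)$-independent hypotheses, choose $\epsilon_0 = \epsilon_0(\alpha, V, a, b)$ so small that
\begin{equation*}
    |L(z) - L(E)| < \tfrac{1}{1000}\inf_{E\in[a,b]}L(E) \quad \text{and} \quad 0 \leqslant \beta(\alpha) < \tilde{c}_1 \inf_{z \in K} L(z)
\end{equation*}
both hold. In particular $L(z)$ is uniformly comparable to $L(E)$ throughout $K$, and the arithmetic-geometric hypothesis of Theorem \ref{HZ} is retained at every $z \in K$.

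The second step is to re-run the BMO/avalanche-principle argument of \cite{MR4373222} with $E$ replaced by $z$. The quantitative inputs are (i) a uniform upper bound $\sup_{|\Im \theta| < h'} u_N(\theta; z) \leqslant C(V, h)$ for some $h' < h$, which follows from the trivial operator-norm bound on $M_N$ together with a compactness argument on $K$; and (ii) a positive lower bound on $L(z)$, supplied by Step 1. Since the dependence of the constants $\tilde{c}_1, \tilde{c}_2$ in \cite{MR4373222} on $L$ is polynomial and the dependence on the subharmonic envelope is via its $L^\infty$-norm, all constants can be taken uniform in $z \in K$. The average $L_N(z) = \int_{\mathbb{T}} u_N(\theta; z)\,d\theta$ varies continuously in $z$ by dominated convergence, so the conclusion about $|u_N(\theta) - L_N(z)| > \tfrac{1}{100}L(z)$ inherits the same exponential measure bound $e^{-\tilde{c}_2 L(z) N}$.

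The only potential obstacle is bookkeeping, namely tracking that every constant extracted in \cite{MR4373222} depends on $L$ in a continuous manner and on $V, \alpha$ only through quantities that are unaffected by the complex shift; because the argument is built entirely on subharmonic-function machinery in the $\theta$-variable, rather than on any real-analytic structure of $E$, this verification is routine. No genuinely new estimate is required, and the statement of Corollary \ref{HZcoro} follows by applying the modified proof to $z \in K$ in place of $E \in [a,b]$.
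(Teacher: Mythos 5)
Your proposal is correct and follows essentially the same route as the paper: the paper justifies Corollary \ref{HZcoro} with the single observation that the proof of Theorem \ref{HZ} relies only on subharmonicity (in $\theta$) and therefore persists for $z=E+i\epsilon$ with $|\epsilon|\leqslant\epsilon_0$. Your write-up simply makes explicit the choice of $\epsilon_0$ via continuity of $L(z)$ and the uniformity of the constants over the complex neighbourhood, which is exactly the verification the paper declares routine.
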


With Theorem \ref{HZ} and Corollary \ref{HZcoro} in hand, we deduce the following large deviation theorem for Green's function in the Liouville setting.
\begin{theorem}\label{LDTgreen}
    Let $\alpha\in\mathbb{R}\setminus\mathbb{Q}$ and $V\in C^{\omega}_{h}(\mathbb{T},\mathbb{R})$. There exist constants $\tilde{c}_{1}(V,h), \tilde{c}_{2}(V,h)\in (0,1)$ such that, if
    \begin{equation*}
        0\leqslant \beta(\alpha)<\tilde{c}_{1} \inf_{E\in [a,b]} L(E),
    \end{equation*}
    there exist $N_{2}=N_{2}(\alpha, V, a,b)>0$ and $\epsilon_{0}=\epsilon_{0} (\alpha, V, a,b)>0$ such that for any $N\geqslant N_{2}$, and $z=E+i\epsilon$ with $E\in[a,b]$ and $|\epsilon|\leqslant \epsilon_{0}$, the following holds. There exists a subset $\Theta_{N}\subseteq \mathbb{T}$ (depending on $z$) with
    \begin{equation*}
        \leb (\Theta_{N})\leqslant  e^{-\tilde{c}_{2} L(z) (2N+1)},\quad \deg (\Theta_{N})\lesssim_{h} L(z)^{2} N^{5},
    \end{equation*}
    such that for any $\theta\notin \Theta_{N}$, one of the intervals
    \begin{equation*}
        \Lambda=[-N,N]; [-N,N-1]; [-N+1,N]; [-N+1,N-1]
    \end{equation*}
    will satisfy
    \begin{equation*}
        |G_{\Lambda}(z)(m,n)|\leqslant e^{-\frac{1}{2}L(z)|m-n|}, \quad \text{for any}\ |m-n|>\frac{|\Lambda|}{20}.
    \end{equation*}
    % where $z=E+i\epsilon$ with $E\in[a,b]$ and $|\epsilon|\leqslant \epsilon_{0}$.
\end{theorem}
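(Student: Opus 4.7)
The proof reduces the Green's function estimate to transfer matrix bounds via Cramer's rule \eqref{cramer}, and then invokes Corollary \ref{HZcoro} on two fronts: once to lower-bound the denominator (using a pigeonhole over four shifted intervals) and once to upper-bound the two numerator factors.

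\emph{Step 1: Lower bound on the denominator via the four-interval trick.} By formula \eqref{four}, the four entries of $M_{2N+1}(\theta - N\alpha)$ are (up to sign) precisely the four determinants $\det(H_{\Lambda_i}(\theta)-z)$ for $\Lambda_i\in\{[-N,N],[-N,N-1],[-N+1,N],[-N+1,N-1]\}$; I would verify this by shifting via $H_{[x_1,x_2]}(\theta)=H_{x_2-x_1+1}(\theta+x_1\alpha)$. Applying Corollary \ref{HZcoro} at the point $\theta-N\alpha$ with window length $2N+1$ produces a set $B_0$ of measure $\leqslant e^{-\tilde c_2 L(z)(2N+1)}$ such that for $\theta\notin B_0$,
\begin{equation*}
\|M_{2N+1}(\theta-N\alpha)\|\geqslant e^{(L(z)-L(z)/100)(2N+1)}.
\end{equation*}
Since the operator norm of a $2\times 2$ matrix is bounded by four times the maximal entry in modulus, at least one of the four $\Lambda_i$ satisfies $|\det(H_{\Lambda_i}(\theta)-z)|\geqslant \tfrac14 e^{(L(z)-L(z)/100)(2N+1)}$. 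This designates the ``good'' $\Lambda$ appearing in the conclusion.

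\emph{Step 2: Upper bounds on numerators.} For each length $k\in[K_0,2N+1]$ (where $K_0$ is a large constant) and each admissible shift $\tau\in\{x_1\alpha,(n+1)\alpha:x_1,n\in[-N,N]\}$, apply Corollary \ref{HZcoro} to get a bad set of measure $\leqslant e^{-\tilde c_2 L(z)k}$ on which $\|M_k(\theta+\tau)\|>e^{(L(z)+L(z)/100)k}$. Union over the $O(N^2)$ such sets; by choosing $K_0$ appropriately (so that $N^2 e^{-\tilde c_2 L(z)K_0}$ is controlled) the union $B_1$ still has measure $\leqslant e^{-\tilde c_2 L(z)(2N+1)/2}$, and we absorb it into $B_0$ at the cost of slightly shrinking $\tilde c_2$. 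For the finitely many $k<K_0$ the trivial bound $\|M_k\|\leqslant C(V,K_0)$ contributes only a harmless multiplicative constant.

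\emph{Step 3: Assembling via Cramer's rule.} Let $\Lambda=[x_1,x_2]$ be the good interval and fix $m\leqslant n$ in $\Lambda$ with $|m-n|>|\Lambda|/20$. Then \eqref{cramer} combined with Steps 1--2 yields
\begin{equation*}
|G_{\Lambda}(z,\theta)(m,n)|\leqslant 4\,e^{(L(z)+L(z)/100)(|\Lambda|-|m-n|-1)-(L(z)-L(z)/100)(2N+1)}.
\end{equation*}
Since $|\Lambda|\in\{2N-1,2N,2N+1\}$, the $(2N+1)$ in the exponent may be replaced by $|\Lambda|$ up to $O(L(z))$. A direct computation together with $|\Lambda|<20|m-n|$ gives exponent $\leqslant -0.6\,L(z)|m-n|+O(L(z))$, which is $\leqslant -\tfrac12 L(z)|m-n|$ for $N$ large.

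\emph{Step 4: Semi-algebraic degree bound.} The set $\Theta_N=B_0\cup B_1$ is defined by finitely many inequalities of the form $\|M_k(\theta+\tau)\|^2 \gtrless e^{\cdots}$. Approximating $V\in C^\omega_h$ by a trigonometric polynomial of degree $\lesssim_h \log N$ (with error $\ll e^{-L(z)(2N+1)}$ absorbed into the thresholds), the entries of $M_k$ become trigonometric polynomials of degree $\lesssim_h k\log N$ in $\theta$; their squared moduli thus have polynomial degree $\lesssim_h k\log N$. Taking unions over $O(N^2)$ shifts and scales, with an extra $L(z)^2$ factor arising from the $L(z)$-dependent thresholds inherited from Corollary \ref{HZcoro}, yields $\deg(\Theta_N)\lesssim_h L(z)^2 N^5$.

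\emph{Expected main obstacle.} Steps 1--3 are a fairly standard packaging of the LDT, and I would expect the arithmetic to go through once the four-interval identity is spelled out. The delicate point is Step 4: one must truncate the Fourier expansion of $V$ at a scale that simultaneously keeps the truncation error subdominant to $e^{-L(z)(2N+1)}$, keeps the degree polynomially bounded in $N$, and survives the union over $O(N^2)$ shift--scale pairs without destroying the measure estimate. Threading this needle to produce both $\mathrm{Leb}(\Theta_N)\leqslant e^{-\tilde c_2 L(z)(2N+1)}$ and $\deg(\Theta_N)\lesssim_h L(z)^2 N^5$ simultaneously is the place where bookkeeping is heaviest.
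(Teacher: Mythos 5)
Your Steps 1 and 3 match the paper's argument (pigeonhole over the four intervals via \eqref{four} to lower-bound the denominator, then Cramer's rule \eqref{cramer} and the $|m-n|>|\Lambda|/20$ arithmetic), but Step 2 contains a genuine gap. You propose to upper-bound the numerator transfer matrices by applying the large deviation theorem at every scale $k\in[K_0,2N+1]$ and taking a union bound. The resulting bad set $B_1$ has measure at least of order $e^{-\tilde c_2 L(z)K_0}$ (the contribution of the smallest admitted scale alone), so to achieve $\leb(B_1)\leqslant e^{-\tilde c_2 L(z)(2N+1)/2}$ you would be forced to take $K_0\gtrsim N$; but then for scales $k<K_0\sim N$ the ``trivial bound'' $\|M_k\|\leqslant C^k$ is useless in Step 3, since $\log C$ generally exceeds $L(z)$ and the exponent no longer comes out negative when $|\Lambda|-|m-n|$ is of order $|\Lambda|$. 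Conversely, if $K_0$ is a constant, then $\leb(\Theta_N)$ is only a small constant rather than exponentially small in $N$, and that exponential smallness is exactly what the discrepancy argument in Section 6 consumes. The paper sidesteps this entirely: the numerators are controlled \emph{deterministically}, uniformly in $\theta$, by the unique-ergodicity (Furman-type) upper bound \eqref{fur}, $\sup_{\theta}\|M_k(\theta)\|\leqslant e^{k(L(z)+\varepsilon)}$ for all $k\geqslant \bar N_1$, so the only probabilistic event is the single denominator event and no union over scales or shifts ever occurs.

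There is also a quantitative error in Step 4: truncating $V$ at Fourier degree $\lesssim_h\log N$ gives an error $\|V\|_h e^{-2\pi h c\log N}=N^{-c'}$, which is only polynomially small and cannot be absorbed into thresholds that must be resolved at relative accuracy $e^{-cL(z)N}$ (a perturbation of $V$ of size $\delta$ moves the determinants by roughly $\delta\, N e^{N(L(z)+\varepsilon)}$ by telescoping). The paper truncates at degree $C'N$ with $C'\sim L(z)/h$, which is what produces the $L(z)^{2}N^{5}$ bound after the further step — omitted in your sketch — of replacing $\cos 2\pi\theta,\sin 2\pi\theta$ by Taylor polynomials of degree $N$ so that the defining inequality becomes a genuine polynomial inequality in $\theta$ in the sense of Definition \ref{defsemi}.
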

\begin{proof}
    By Corollary \ref{HZcoro}, for any $N\geqslant N_{1}(\alpha,V,a,b)$, there exists a subset $\Theta_{N}\subseteq \mathbb{T}$ with $\leb(\Theta_{N})\leqslant e^{-\tilde{c}_{2}L(z) (2N+1)}$ such that for any $\theta\notin \Theta_{N}$, 
    \begin{equation*}
        \|M_{[-N,N]}(\theta)\|\geqslant e^{\frac{99}{100}(2N+1)L(z)}.
    \end{equation*}
    Then it follows from \eqref{four} that for one of the intervals
    \begin{equation*}
        \Lambda=[-N,N]; [-N,N-1]; [-N+1,N]; [-N+1,N-1]
    \end{equation*} 
    we have for any $\theta\notin\Theta_{N}$,
    \begin{equation}\label{db}
        |\det(H_{\Lambda}(\theta)-z)|\geqslant \frac{1}{4}e^{\frac{99}{100}(2N+1)L(z)}.
    \end{equation}
    By \eqref{fur} and the compactness argument, for any $\varepsilon>0$, there exists $\bar{N}_{1}=\bar{N}_{1}(\alpha,V,a,b,\varepsilon)$ such that for any $N\geqslant \bar{N}_{1}$, we have
    \begin{equation}\label{nb}
        \sup_{\theta\in\mathbb{T}}\|M_{N}(\theta)\|\leqslant e^{N(L(z)+\varepsilon)}.
    \end{equation}
    It follows from \eqref{cramer},\eqref{db} and \eqref{nb} that for any $\theta\notin\Theta_{N}$, we have 
    \begin{equation*}
        |G_{\Lambda}(\theta)(m,n)|\leqslant e^{\frac{1}{100}|\Lambda|L(z)} e^{|\Lambda|\varepsilon} e^{-L(z)|m-n|},\ \text{for any}\ m,n\in\Lambda.
    \end{equation*}
    In particular, for any $\theta\notin\Theta_{N}$ and $m,n\in\Lambda$ with $|m-n|\geqslant |\Lambda|/20$, 
    \begin{equation*}
        |G_{\Lambda}(\theta)(m,n)|\leqslant  e^{-\frac{1}{2}L(z)|m-n|}.
    \end{equation*}

    In the following, we estimate the complexity. Consider the property 
    \begin{equation}\label{P}
        |G_{\Lambda}(\theta)(m,n)|\leqslant e^{\frac{1}{100}|\Lambda|L(z)} e^{|\Lambda|\varepsilon} e^{-L(z)|m-n|},\ \text{for any}\ m,n\in\Lambda.
    \end{equation}
    Since $V\in C^{\omega}_{h}(\mathbb{T},\mathbb{R})$, we write $V(\theta)=\sum_{k\in\mathbb{Z}} \widehat{V}_{k} e^{2\pi i k \theta}$. Consider its Fourier truncation $V_{1}(\theta)=\sum_{|k|\leqslant C' N} \widehat{V}_{k}e^{2\pi i k \theta}$ where $C'= \frac{100}{2\pi h}L(z)$. Then
    \begin{equation*}
        \sup_{\theta\in\mathbb{T}} |V_{1}(\theta)-V(\theta)|\leqslant \|V\|_{h}e^{-2\pi hC'N} \leqslant \|V\|_{h} e^{-100L(z)N}.
    \end{equation*}
    By \eqref{four} and a telescoping argument, for any $1\leqslant J\leqslant |\Lambda|$, we have
    \begin{equation*}
        \begin{split}
            \sup_{\theta\in\mathbb{T}} & |\det (H_{J}(\theta,V)-z)-\det (H_{J}(\theta,V_{1})-z)|\\
            &\leqslant \sup_{\theta\in\mathbb{T}} \bigg\| \prod_{j=J}^{1} S_{z}^{V}(\theta+j\alpha) - \prod_{j=J}^{1} S_{z}^{V_{1}}(\theta+j\alpha)\bigg\|\\
            &\leqslant  \sup_{\theta\in\mathbb{T}} |V_{1}(\theta)-V(\theta)| Je^{J(L(z)+\varepsilon)} \\
            &\leqslant e^{-20 |\Lambda| L(z)},
        \end{split}
    \end{equation*}
    which implies for any $\theta\notin \Theta_{N}$,
    \begin{equation*}
        |G_{\Lambda}(\theta, V)(m,n)-G_{\Lambda}(\theta, V_{1})(m,n)| \leqslant e^{-10|\Lambda|L(z)}.
    \end{equation*}
    Thus we may substitute $V$ by $V_{1}$ in \eqref{P}, that is,
    \begin{equation}\label{PP}
        e^{L(z)|m-n|}|\det (H_{\Lambda}(V_{1},\theta)-E)_{m,n} | \leqslant \frac{1}{2} e^{\frac{1}{100}|\Lambda|L(z)} e^{|\Lambda|\varepsilon} |\det (H_{\Lambda}(V_{1},\theta)-z)|,
    \end{equation}
    where $M_{m,n}$ denotes the $(m,n)$-minor of matrix $M$. Using the Hilbert-Schmidt norm, we may consider the tighter inequality,
    \begin{equation}\label{PPP}
        \begin{split}
            \sum_{m,n\in\Lambda}  e^{2L(z)|m-n|} &\Big(\det (H_{\Lambda}(V_{1},\theta)-z)_{m,n} \Big)^{2} \\
            &\leqslant \frac{1}{4} e^{\frac{1}{50}|\Lambda|L(z)} e^{|\Lambda|\varepsilon}\Big(\det (H_{\Lambda}(V_{1},\theta)-z)\Big)^{2}.
        \end{split}
    \end{equation}
    Clearly, \eqref{PPP} is of the form
    \begin{equation}\label{P1}
        P_{1}(\cos 2\pi \theta, \sin 2\pi \theta) \geqslant 0,
    \end{equation}
    where $P_{1}$ is a polynomial with degree at most $C'^{2}N^{4}$. One further truncates ``$\cos$'' and ``$\sin$'' by a Taylor polynomial of degree at most $N$, and get a tighter inequality of the form
    \begin{equation}\label{P2}
        P_{2}(\theta)\geqslant 0,
    \end{equation}
    where the degree of $P_{2}$ is at most $C'^{2}N^{5}$. It is obvious that \eqref{P2} implies \eqref{P1}, which further implies \eqref{P}.
    
    Denote ${\Theta}'_{N}=\{\theta\in [0,1]: P_{2}(\theta)< 0\}$. By the definition of the degree of semi-algebraic sets (see Definition \ref{defsemi}), we have $\deg(\overline{\Theta'_{N}})\leqslant C'^{2}N^{5}$.

    For any $\theta\notin\Theta_{N}$, it is easy to check that $P_{2}(\theta)\geqslant 0$ (with $\varepsilon$ replaced as $2\varepsilon$). Thus ${\Theta}'_{N}\subseteq \Theta_{N}$ and
    \begin{equation*}
        \leb(\overline{\Theta'_{N}}) \leqslant \leb(\Theta_{N})\leqslant e^{-\tilde{c}_{2}L(z)(2N+1)}. 
    \end{equation*}
    Finally one may replace $\Theta_{N}$ by $\overline{\Theta'_{N}}$. This finishes the proof.
\end{proof}

\section{Proof of main results}\label{Proof of main Results-Section}

We are now ready to prove our main results by combining the tools developed in Sections \ref{CriterionForQuantumDynamics}, \ref{DiscrepancyEstimates} and \ref{LargeDeviationTheoremForLiouvilleFrequency}.
\subsection{Proof of Theorem \ref{qdDC}}
\begin{proof}
    Fix a small $\varepsilon > 0$ and take $N$ sufficiently large. Let $C\geqslant 1$ be the constant $C(d)$ from Lemma \ref{covers} when $d=1$. Define an interval $I\subseteq \mathbb{Z}$ centered at $0$ and of length $|I| = N^{\delta}$ with $\delta = \frac{1}{5\gamma C} - \varepsilon$. Set $I_j = I + j$ for $|j| \leqslant N$. 

    Recall that for any $\theta \in \mathbb{T}$, one has $H_{I_j}(\theta) = H_{I}(\theta + j\alpha)$. Then for any $z = E + \frac{i}{T}$ with sufficiently large $T$,
    \begin{equation}\label{shift}
        G_{I_j}(z, \theta) = G_{I}(z, \theta + j\alpha).
    \end{equation}
    By Theorem \ref{LDTgreen}, there exists a subset $\Theta_{I} \subseteq \mathbb{T}$ such that $\leb(\Theta_{I}) \leqslant e^{-\tilde{c}_2 L(z)|I|}$ and $\deg(\Theta_{I}) \lesssim_{h} L(z)^2 |I|^5$, with the property that for any $\theta \notin \Theta_{I}$ and $|m-n| \geqslant |I|/20$,
    \begin{equation*}
        |G_{I}(z, \theta)(m,n)| \leqslant e^{-\frac{1}{2} L(z) |m-n|} \leqslant e^{-\frac{ \lambda}{40} |I|}.
    \end{equation*}

    Applying Theorem \ref{semidks} and Corollary \ref{cor_NlogN}, for any $\theta \in \mathbb{T}$, we obtain
    \begin{equation*}
        \begin{split}
            \# \big\{ |j| \leqslant N \,\colon\, \theta + j\alpha \bmod \mathbb{Z} \in \Theta_{I} \big\} 
            &\lesssim_{\gamma, \eta} (\deg \Theta_{I})^C N N^{-\frac{1}{\gamma}} (\log N)^2 \\
            &\lesssim_{\gamma, \eta, h} L(z)^{2C} N^{1 + 5\delta C - \frac{1}{\gamma}} (\log N)^2 \\
            &= o(N),
        \end{split}
    \end{equation*}
    which implies that there exists some $|j'| \leqslant N$ such that $\theta + j'\alpha \bmod \mathbb{Z} \notin \Theta_{I}$ for sufficiently large $N$. Consequently,
    \begin{equation*}
        |G_{I}(z, \theta + j'\alpha)(m,n)| \leqslant e^{-\frac{\lambda}{40} |I|}.
    \end{equation*}
    Moreover, by \eqref{shift} and $|I_{j'}| = |I|$, it follows that
    \begin{equation*}
        |G_{I_{j'}}(z, \theta)(m,n)| \leqslant e^{-\frac{ \lambda}{40} |I_{j'}|}.
    \end{equation*}

    Finally, applying Corollary \ref{ca1} with $\Psi(N) = N^{\delta}$ where $\delta = \frac{1}{5\gamma C} - \varepsilon$, we conclude
    \begin{equation*}
        \sup_{\theta\in\mathbb{T}}\langle |X_{H_{\theta}}|^p_{\phi} \rangle (T) \leqslant (\log T)^{5p\gamma C + \varepsilon}.
    \end{equation*}
    This completes the proof.
\end{proof}

\subsection{Proof of Theorem \ref{qdWDC}}
\begin{proof}
   We present only the essential steps, as the proof is similar to that of Theorem \ref{qdDC}.

   Fix a small $\varepsilon > 0$ and take $N$ sufficiently large. Let $C \geqslant 1$ be the constant $C(d)$ from Lemma \ref{covers} when $d=1$. Define an interval $I$ centered at $0$ and of length 
   \begin{equation*}
       |I| = \exp\big[ \delta (\log N)^{1/\gamma} \big]
   \end{equation*}
   with 
   \begin{equation*}
       \delta = \frac{1}{5C} \bigg[ \bigg(\frac{1}{\kappa} \bigg)^{1/\gamma} - \varepsilon \bigg].
   \end{equation*}

   Applying Theorem \ref{semidks} together with Corollary \ref{cor_exp(log(n))}, we obtain that for any $\theta \in \mathbb{T}$,
   \begin{equation*}
       \begin{split}
           \# \big\{ |j| \leqslant N & : \theta + j\alpha \bmod \mathbb{Z} \in \Theta_{I} \big\} \\
           &\lesssim_{\gamma,\kappa,\eta} (\deg \Theta_{I})^{C} N (\log N)^{1+\frac{1}{\gamma}} 
           \exp \bigg[ - \bigg( \frac{1}{\kappa} \log N \bigg)^{1/\gamma} \bigg] \\
           &\lesssim_{\gamma,\kappa,\eta,h} L(z)^{2C} N 
           \exp \big[ 2\log\log N - \varepsilon (\log N)^{1/\gamma} \big] \\
           &= o(N),
       \end{split}
   \end{equation*}
   which implies that there exists $|j'| \leqslant N$ such that
   \begin{equation*}
       |G_{I_{j'}}(z, \theta)(m, n)| \leqslant e^{ -\frac{ \lambda}{40} |I_{j'}| }.
   \end{equation*}

   Applying Corollary \ref{ca2} with $\Psi(N) = \exp \big[ \delta (\log N)^{1/\gamma} \big]$, we conclude
   \begin{equation*}
       \begin{split}
           \sup_{\theta \in \mathbb{T}} \langle |X_{H_{\theta}}|^p_{\phi} \rangle (T) 
       &\leqslant \exp \left[ p \left( \frac{5C + \varepsilon}{(1/\kappa)^{1/\gamma}} \log\log T \right)^{\gamma} \right] \\
       &\leqslant \exp \bigg( p\kappa (5C+\varepsilon)^{\gamma} (\log\log T)^{\gamma}\bigg).
       \end{split}
   \end{equation*}
\end{proof}

\subsection{Proof of Theorem \ref{qdLiou}}
\begin{proof}
    We present only the essential steps, as the proof is similar to that of Theorem \ref{qdDC}.

    Denote by $C \geqslant 1$ the constant $C(d)$ from Lemma \ref{covers} when $d=1$. Since $\gamma <  \frac{1}{5C}$, fix a small $\varepsilon > 0$ such that $5C\gamma + \varepsilon < 1$. Let $N$ be sufficiently large, and define an interval $I$ centered at $0$ with length
\begin{equation*}
    |I| = (\log N)^{1/\delta},
\end{equation*}
where
\begin{equation*}
    \delta = 5C\gamma + \varepsilon.
\end{equation*}

Applying Theorem \ref{semidks} in combination with Corollary \ref{cor_logN}, we obtain that for any $\theta \in \mathbb{T}$,
\begin{equation*}
    \begin{split}
        \# \{ |j| \leqslant N : \theta + j\alpha \bmod \mathbb{Z} \in \Theta_{I} \}
        &\lesssim_{\kappa,\gamma,\eta} (\deg \Theta_{I})^C N (\log N)^{-1/\gamma} \\
        &\lesssim_{\kappa,\gamma,\eta,h} L(z)^{2C} N (\log N)^{\frac{5C}{\delta} - \frac{1}{\gamma}} \\
        &= o(N),
    \end{split}
\end{equation*}
which implies that there exists $|j'| \leqslant N$ such that
\begin{equation*}
    |G_{I_{j'}}(z, \theta)(m, n)| \leqslant e^{ -\frac{\lambda}{40} |I_{j'}| }.
\end{equation*}

Applying Corollary \ref{ca3} with $\Psi(N) = (\log N)^{1/\delta}$ yields
\begin{equation*}
    \sup_{\theta \in \mathbb{T}} \langle |X_{H_{\theta}}|^p_{\phi} \rangle (T) 
    \leqslant \exp \bigg( p (\log T)^{5C\gamma + \varepsilon} \bigg).
\end{equation*}
\end{proof}

\section*{Acknowledgments}
This work was carried out at the Texas A\&M Mathematics REU 2025 under the thematic program ``Semi-Algebraic Geometry in Schrodinger Equations", supported by the NSF REU  Site  grant DMS-2150094 and the Department of Mathematics at Texas A\&M University. We would like to thank Anne Shiu and Maurice Rojas for organizing and overseeing the REU program.

W.L. was supported in part by NSF grants DMS-2246031 and DMS-2052572, by a Simons Fellowship in Mathematics, and by a Visiting Miller Professorship from the Miller Institute for Basic Research in Science, University of California, Berkeley. W.L. thanks the Department of Mathematics and the Miller Institute for Basic Research in Science at UC Berkeley for their hospitality during his visit in Fall 2025, when part of this work was completed.

%This work is carried out at the \emph{Texas A$\&$M Mathematics REU 2025} under the thematic program \emph{Semi-Algebraic Geometry in Schr\"odinger Equations} funded by the NSF REU site grant DMS-2150094.

%Part of this work was completed during the \emph{2025 Texas Summer School in Mathematical Physics} sponsored by the NSF grants DMS-2052572, DMS-2246031, and DMS-2513006. The authors thank Jake Fillman, Wencai Liu and Xin Liu for organizing the event.

%The authors are very grateful to the anonymous referees for their knowledgeable reports, which helped us to improve our manuscript.

\section*{Statements and Declarations}
{\bf Conflict of Interest} 
The authors declare no conflicts of interest.

\vspace{0.2in}
{\bf Data Availability}
Data sharing is not applicable to this article as no new data were created or analyzed in this study.

\bibliographystyle{alpha}
\bibliography{main}
% \end{sloppypar}
\end{document}